\newtheorem{lemma}{Lemma}
\newtheorem{corollary}{Corollary}[lemma]
\newtheorem{remark}{Remark}[lemma]
\newcommand{\ab}{|}
\newcommand{\der}{\partial}
\newcommand{\de}{\mathrm{d}}
\newcommand{\sdil}{\phi}
\newcommand{\tsdil}{\tilde{\phi}}
\newcommand{\tomega}{\tilde{\omega}}
\newcommand{\tdelta}{\tilde{\delta}}
\newcommand{\tsigma}{\tilde{\sigma}}
\newcommand{\tvarphi}{\tilde{\varphi}}
\newcommand{\ttau}{\tilde{\tau}}
\newcommand{\e}{\mathrm{e}}
\newcommand{\p}{\mathrm{P}}
\begin{document}
\numberwithin{equation}{section}

\title{Accelerating universe at the end of time}
\author{Gary Shiu}
\email{shiu@physics.wisc.edu}
\author{Flavio Tonioni}
\email{tonioni@wisc.edu}
\affiliation{Department of Physics, University of Wisconsin-Madison, 1150 University Avenue, Madison, WI 53706, USA}
\author{Hung V. Tran}
\email{hung@math.wisc.edu}
\affiliation{Department of Mathematics, University of Wisconsin-Madison, 480
Lincoln Drive, Madison, WI 53706, USA}

\begin{abstract}
We investigate whether an accelerating universe can be realized as an asymptotic late-time solution of FLRW-cosmology with multi-field multi-exponential potentials. Late-time cosmological solutions exhibit a universal behavior which enables us to bound the rate of time variation of the Hubble parameter. In string-theoretic realizations, if the dilaton remains a rolling field, our bound singles out a tension in achieving asymptotic late-time cosmic acceleration. Our findings go beyond previous no-go theorems in that they apply to arbitrary multi-exponential potentials and make no specific reference to vacuum or slow-roll solutions. We also show that if the late-time solution approaches a critical point of the dynamical system governing the cosmological evolution, the criterion for cosmic acceleration can be generally stated in terms of a directional derivative of the potential.
\end{abstract}

\maketitle

\section{Introduction}

The discovery of dark energy presents a deep challenge for quantum gravity. While a number of sophisticated scenarios for realizing de Sitter vacua in string theory have been developed \cite{Flauger:2022hie}, a fully explicit construction remains elusive. The root of the challenge is that the source of cosmic acceleration should be derived (rather than postulated) in a fundamental theory of gravity. It is a formidable task to demonstrate that the microphysics which stabilizes all the moduli would lead to a theoretically-controlled metastable de Sitter vacuum.

The Dine-Seiberg problem highlights the difficulty in finding parametrically weakly-coupled vacua \cite{Dine:1985he}. To avoid runaways to asymptotic regions of the moduli space, different-order terms in the moduli potential should necessarily compete. Having arbitrarily weak coupling would mean that there exist infinitely many vacua, or hidden parameters not related to vacuum expectation values of any field. This makes asymptotic runaway potentials an interesting alternative \cite{Obied:2018sgi, Ooguri:2018wrx}. Indeed, the observed small numbers and the approximate symmetries in nature suggest that the current universe may be approaching an asymptotic region of the field space. In this work, we study such asymptotic regions and prove a no-go theorem for an accelerating universe. As in many dynamical systems, the late-time regime exhibits some universal behaviors: this allows us to prove a bound on the rate of change of the Hubble parameter with only knowledge of the dimension of spacetime. The way we formulate this no-go statement also makes it clear how to evade it.

The main results of our paper are the following.
\begin{enumerate*}[label=(\roman*)]
    \item \label{result 1} We find a bound on the rate of time variation of the Hubble parameter at late time irrespective of whether stationary (vacua) or scaling solutions (which are the critical points of the dynamical system of interest) are reached.
    \item \label{result 2} This bound, when checked against string-theoretic constructions, imposes a generic obstacle to acceleration if the dilaton is one of the rolling fields. This also suggests ways out: for instance, if the dilaton is stabilized, or rolling in the non-asymptotic region, or if there are sufficiently many terms in the scalar potential (with terms of both signs necessarily present), the bound on acceleration is not automatically violated.
    \item \label{result 3} If a critical point is reached, we can express the proper measure of acceleration -- defined as the Hubble-parameter time variation -- in terms of a directional derivative, without assuming that a single term dominates in the potential or whether the kinetic or potential term dominates. We emphasize that in general the parameter $\smash{\epsilon = - \dot{H}/H^2}$, rather than the gradient of the potential commonly used as a swampland criterion, is the proper diagnostic for whether accelerating universes can occur.
\end{enumerate*}
The bound \ref{result 1} and the obstacle \ref{result 2} go beyond previous no-go results as we allow for quantum effects and we encompass vacua, non-vacua, slow-roll and non-slow-roll solutions. Detailed proofs are provided in appendix \ref{app: late-time bounds on cosmological autonomous systems}.

\section{Constraints on FLRW-cosmologies} \label{sec: FLRW-cosmology constraints}

The moduli space of string theory is typically curved. Nonetheless, string compactifications typically originate low-energy effective theories in which a number of canonically-normalized scalar fields $\phi^a$, for $a=1,\dots,n$, are subject to a scalar potential of the form
\begin{equation} \label{generic exponential potential}
    V = \sum_{i = 1}^m \Lambda_i \, \e^{- \kappa_d \gamma_{i a} \phi^a}.
\end{equation}
Here, $\Lambda_i$ and $\gamma_{ia}$ are constants that depend on the microscopic origin of the scalar potential, while $\kappa_d$ is the $d$-dimensional gravitational coupling. The set of scalars $\phi^a$ includes minimally the $d$-dimensional dilaton $\smash{\tdelta}$ and a radion $\smash{\tsigma}$ that controls the string-frame volume, unless these fields are stabilized at high energy scales. All theories with asymptotically-flat moduli spaces and without flat directions are amenable to our analysis. This is not a restrictive scenario since axions can be stabilized non-perturbatively. This general class of potentials also represents noteworthy phenomenological models by themselves and e.g. subsumes generalized assisted inflation \cite{Liddle:1998jc, Copeland:1999cs}. Let the non-compact $d$-dimensional spacetime be characterized by the usual FLRW-metric
\begin{equation*}
    d \tilde{s}_{1,d-1}^2 = - \de t^2 + a^2(t) \, \de l_{\mathbb{E}^{d-1}}^2,
\end{equation*}
with the Hubble parameter $\smash{H = \dot{a}/a}$ and $d>2$. One can reformulate the scalar-field and Friedmann equations in terms of an autonomous system of ordinary differential equations.

An accelerated cosmological expansion can only be achieved if the total scalar potential is positive: therefore, we focus on the scenario in which, at least asymptotically, $V>0$. Let $\smash{\Lambda_{i_+} > 0}$ and $\smash{\Lambda_{i_-} < 0}$ denote the positive- and negative-definite scalar-potential coefficients, respectively, distinguishing by the indices $\smash{i = i_+, i_-}$. For each  field $\phi^a$, let $\smash{\gamma_\pm^a = \min_{i_\pm} {\gamma_{i_\pm}}^a}$ and $\smash{\Gamma_\pm^a = \max_{i_\pm} {\gamma_{i_\pm}}^a}$. For all fields $\smash{\phi^a}$, let $\smash{\gamma_+^a \geq \Gamma_-^a}$: if $\smash{\gamma_+^a > 0}$, let $\smash{\gamma_\infty^a = \gamma_+^a}$; else, let $\smash{\gamma_\infty^a = 0}$. Then, if $\smash{(\gamma_\infty)^2 \leq 4 \, (d-1) / (d-2)}$, we are able to prove that, at all times $t > t_\infty$, where $t_\infty$ is a sufficiently large time, the $\epsilon$-parameter is bounded from below as
\begin{equation} \label{epsilon bound}
    \epsilon \geq \dfrac{d-2}{4} \, (\gamma_\infty)^2.
\end{equation}
Of course, the $\epsilon$-parameter is also bounded from above as $\smash{\epsilon \leq d - 1}$. If $\smash{\gamma_-^a \geq \Gamma_+^a}$ for a field, then one can redefine this field as $\smash{{\phi'}{}^a = -\phi^a}$ and find the same bound in terms of the flipped $\gamma$-coefficients. If for a field none of these orderings is in place, then there is no obvious contribution to the bound and thus we should set $\smash{\gamma_\infty^a = 0}$. If $\smash{(\gamma_\infty)^2 > 4 \, (d-1) / (d-2)}$, irrespective of the ordering of the $\smash{\gamma_\pm^a}$- and $\smash{\Gamma_\pm^a}$-coefficients, then the $\epsilon$-parameter asymptotically approaches the value $\epsilon = d-1$. All these statements are proven in appendix \ref{app: late-time bounds on cosmological autonomous systems}: see corollary \ref{corollary: f - upper bound} and remarks \ref{remark: bound hierarchy}-\ref{remark: general f - upper bound}, and lemma \ref{lemma: non-proper attractors} and remark \ref{remark: general non-proper attractors}. As physical observables are invariant under field-space $\mathrm{O}(n)$-rotations, the optimal version of the bound can be expressed as $\smash{\epsilon \geq [(d-2) / 4] \, \max_{\mathrm{R} \in \mathrm{O}(n)} [\gamma_\infty(\mathrm{R})]^2 = [(d-2) / 4] \, (\hat{\gamma}_\infty)^2}$, where $\smash{[\gamma_\infty(\mathrm{R})]^2}$ represents the $\smash{(\gamma_\infty)^2}$-coefficient in a rotated basis. In the optimal basis, the scalar potential reads $\smash{V = V_\infty + \sum_{\iota=m_\infty+1}^m \Lambda_\iota \, \e^{- \kappa_d \hat{\gamma}_{\iota \hat{\varphi}} \hat{\varphi} - \kappa_d \hat{\gamma}_{\iota \check{a}} \hat{\phi}^{\check{a}}}}$, where
\begin{align*}
    V_\infty = \biggl[ \sum_{\sigma=1}^{m_\infty} \Lambda_\sigma \, \e^{- \kappa_d \hat{\gamma}_{\sigma \check{a}} \hat{\phi}^{\check{a}}} \biggr] \, \e^{- \kappa_d \hat{\gamma}_\infty \hat{\varphi}}.
\end{align*}
Here, $\smash{\hat{\gamma}_\infty = \sqrt{(\hat{\gamma}_\infty)^2}}$ is the smallest-possible coupling for the field $\smash{\hat{\varphi}}$, with $\smash{\hat{\gamma}_\infty \leq \hat{\gamma}_{\iota \hat{\varphi}}}$, and $\smash{\hat{\phi}^{\check{a}}}$ are the further $n-1$ fields with their couplings $\smash{\hat{\gamma}_{i \check{a}}}$. In a single-field theory with the potential $\smash{\hat{V}_\infty = \hat{\Lambda}_\infty \, \e^{- \kappa_d \hat{\gamma}_\infty \hat{\varphi}}}$, the late-time $\epsilon$-parameter would be exactly $\smash{\epsilon = [(d-2)/4] \, (\hat{\gamma}_\infty)^2}$ if $\smash{(\hat{\gamma}_\infty)^2 \leq 4 \, (d-1) / (d-2)}$, or $\epsilon = d-1$ otherwise \cite{Copeland:1997et, Rudelius:2022gbz}. The presence of additional fields and couplings makes this just a lower bound.

A special situation is the one in which all terms in the potential are positive ($\smash{\Lambda_i > 0}$). In this case, there are no $\smash{\gamma_-^a}$- and $\smash{\Gamma_-^a}$-coefficients to compare with, and the bound in eq. (\ref{epsilon bound}) is automatically true. Also, our bound subsumes the special case of a single potential $\smash{V (\phi) = \Lambda \, \e^{- \kappa_d \gamma \phi}}$, where the late-time Hubble parameter takes the form $\smash{H = p / t}$, with $\smash{p = \mathrm{max} \, \lbrace 1/(d-1), 4 / \bigl[ (d-2) \gamma^2 \bigr] \rbrace}$ \cite{Copeland:1997et, Rudelius:2022gbz}; we further emphasize that it is generally not correct to assume that one exponential potential will dominate over the others, since for instance scaling solutions are such that all terms fall over time in exactly the same way. In string-theoretic constructions, the potentials generated by non-trivial curvature, NSNS-fluxes, heterotic Yang-Mills fluxes, type-II RR-fluxes, type-II D-brane/O-plane sources and generic Casimir-energy terms have the structure of eq. (\ref{generic exponential potential}). In fact, here $\smash{\tdelta}$ always appears with a $\gamma$-coefficient such that
\begin{equation} \label{dilaton-gamma lower bound}
    \gamma_{\tdelta}{}^2 \geq \dfrac{4}{d-2}.
\end{equation}
This is because all interactions in any string-frame effective action, in terms of the 10-dimensional dilaton $\sdil$, are weighed by string-coupling powers of the form $\smash{f(\sdil) = \e^{- \chi_{\mathrm{E}} \sdil}}$, with $\smash{\chi_{\mathrm{E}}}$ being the Euler number that weighs the perturbative order via the string-worldsheet topology: as the maximum value, for tree-level interactions, corresponds to a sphere $\smash{\chi_{\mathrm{E}} (\mathrm{S}^2) = 2}$, one can never violate eq. (\ref{dilaton-gamma lower bound}). Because $\smash{(\gamma_\infty)^2 \geq \gamma_{\tdelta}{}^2}$, this rules out late-time accelerated expansion in all string-theoretic constructions with positive-definite scalar-potential terms in which the $d$-dimensional dilaton is one of the rolling scalars.

If some of the scalar-potential terms are negative-definite, the bound in eq. (\ref{epsilon bound}), together with the dilaton coupling in eq. (\ref{dilaton-gamma lower bound}), does not automatically give an insurmountable obstruction. It is harder to draw general conclusions because the dilaton could appear satisfying neither of the requirements $\smash{\gamma_\pm^{\tilde{\delta}} \geq \Gamma_\mp^{\tilde{\delta}}}$. An exception is the situation with only two terms in the potential, with opposite signs: since $\smash{\gamma_\pm^{\tilde{\delta}} = \Gamma_\pm^{\tilde{\delta}}}$, one of the two inequalities $\smash{\gamma_\pm^{\tilde{\delta}} \geq \Gamma_\mp^{\tilde{\delta}}}$ is necessarily in place. Therefore, an accelerating universe involving a rolling dilaton minimally requires at least three terms in the potential, not all positive.

Although the dilaton is in principle coupled to all the scalar-potential terms, it could be stabilized. If the dilaton is not a rolling scalar, then we cannot draw fully general conclusions based on eq. (\ref{epsilon bound}) since the other fields, such as radions and complex-structure moduli, are not universally characterized as they depend on the compactification. This can also be said in phenomenological constructions that disregard a possible string-theoretic origin, since the exponential couplings are not necessarily constrained by universal principles. Qualitatively, a general expectation is the following: a large number of scalar-potential terms has a tendency to ease the restrictions since it makes it harder to fall in the condition $\smash{\gamma_\infty^a > 0}$; conversely, a large number of rolling fields tends to obstruct acceleration, since the coefficient $\smash{(\gamma_\infty)^2}$ is additive. Indeed, more scalar-potential terms tend to flatten the total potential, whereas more scalar fields tend to make it steeper.

Importantly, we stress that the bound applies only to quintessence-like proposals in which one assumes that we are currently observing an asymptotic regime of the cosmological evolution. It does not inform us about inflation since the latter should be realized as a transient solution. Finally, we emphasize that the bound highlights the difficulty of satisfying the slow-roll condition at late-time cosmic stages.

\section{Scaling cosmologies} \label{sec: scaling cosmologies}
Cosmological solutions where the scale factor is of power-law form, i.e. scaling solutions, have a special role: if $\smash{(\gamma_\infty)^2 \leq 4 \, (d-1) / (d-2)}$, due to eq. (\ref{epsilon bound}), at sufficiently late times, the scale factor is bounded from below and from above by power-law evolution; if $\smash{(\gamma_\infty)^2 > 4 \, (d-1) / (d-2)}$, scaling solutions are inevitable, with a power $\smash{p = 1/(d-1)}$. More generally, scaling solutions correspond to critical points of the cosmological autonomous system. In fact, one can typically find scaling solutions that are perturbative late-time attractors \cite{Bergshoeff:2003vb, Hartong:2006rt}. For these reasons, although it is hard to prove that scaling solutions always capture the inevitable late-time behavior of the complete solutions, they deserve a detailed analysis.

Scaling solutions can be characterized analytically \cite{Collinucci:2004iw}.
If they exist, for a given initial time $t_\infty$, the generic scalar-field trajectories corresponding to a scaling solution $a(t) = a_\infty (t/t_\infty)^p$, with $p \geq 1/(d-1)$, can be parameterized as
\begin{equation}
    \phi^a_* (t) = \phi^a_\infty + \dfrac{1}{\kappa_d} \, \alpha^a \; \mathrm{ln} \, \dfrac{t}{t_\infty},
\end{equation}
with the additional condition $\smash{\gamma_{ia} \alpha^a = 2}$. Then, given the unit vector $\smash{\theta^a_* = \alpha^a / \sqrt{\alpha^b \alpha_b}}$, which follows the trajectory of the scalar fields over the moduli space, we can show that the normalized directional derivative of the scalar potential is related to the expansion rate as
\begin{equation} \label{directional gamma}
    \gamma_* = - \dfrac{1}{V(\phi_*)} \, \theta^a_* \dfrac{\der V}{\kappa_d \, \der \phi^a_*} (\phi_*) = \dfrac{2}{\sqrt{d-2}} \, \sqrt{\epsilon}.
\end{equation}
This can be proven by exploiting explicitly the analytic properties of scaling solutions. Therefore, the power-law scale-factor evolution is accelerated -- meaning that the condition $\smash{\epsilon < 1}$ holds -- only if the directional scalar-potential coefficient is bounded as $\smash{\gamma_* < 2 / \sqrt{d-2}}$.

A point that should be emphasized is the following: the $\epsilon$-parameter measures the rate of acceleration of the scale factor and it is defined as $\epsilon=-\dot{H}/H^2$. It can be estimated via the gradient of the potential, i.e.
\begin{equation}
    \gamma = \dfrac{\sqrt{\delta^{ab} \, \der_a V \der_b V}}{\kappa_d V},
\end{equation}
for instance under the slow-roll approximation, by which one may compute $\smash{\epsilon = - \dot{H} / H^2 \simeq (d-2) \, \gamma^2 / 4}$. For theories with finite $\smash{\gamma_\infty}$-coefficients, as dictated by eq. (\ref{epsilon bound}), and for scaling scenarios, the slow-roll approximation is generically invalid. For the former, this is obvious as long as $\smash{(\gamma_\infty)^2 \gtrsim 4/(d-2)}$. For the latter, the terms that should be dropped in the slow-roll approximation, despite being numerically smaller by a factor $p (d-1) \geq 1$, decrease over time in the same parametric way as the terms that would be kept. Therefore, the parameter $\gamma$ is not necessarily a meaningful quantity to describe the expansion rate. In particular, the triangle inequality shows that $\smash{\gamma_* \leq \gamma}$ for any directional derivative $\smash{\gamma_*}$ \cite{Andriot:2022brg}. For scaling cosmologies, it however happens that $\smash{\sqrt{\delta^{ab} \, \der_a V \der_b V}/ (\kappa_d V) = \gamma_*}$.

One can always identify a single scalar that serves as a measure of time. Each term in the potential evolves as
\begin{align*}
    V_i \bigl[\phi_*^a(t)\bigr] = \bigl( \Lambda_i \, \e^{- \kappa_d \gamma_{i a} \phi_\infty^a} \bigr) \, \dfrac{t_\infty^2}{t^2}.
\end{align*}
This means that, whatever combination of scalar fields appears in each $V_i$-term, this is such as to provide a $\smash{-2 \, \mathrm{ln} \, (t/t_\infty)}$-behavior. So, for an arbitrary term $\smash{V_{i_0}}$, we can define a canonically-normalized scalar field via the redefinition
\begin{equation} \label{clock-field definition}
    \tilde{\gamma} \ttau = \gamma_{i_0 a} \phi^a,
\end{equation}
where the parameter $\smash{\tilde{\gamma}}$ and the field $\smash{\ttau}$ are defined by the $\mathrm{O}(n)$-rotation for the specific $\smash{\gamma_{i_0 a}}$-coefficients, and where the field evolves over time as
\begin{equation} \label{clock-field time evolution}
    \ttau_*(t) = \ttau_\infty + \dfrac{1}{\kappa_d} \dfrac{2}{\tilde{\gamma}} \; \mathrm{ln} \, \dfrac{t}{t_\infty}.
\end{equation}
Indeed, since the action is independent of $\mathrm{O}(n)$-rotations in the moduli space, this is the only possible behavior such that the potential $\smash{V_{i_0} = \Lambda_{i_0} \, \e^{- \kappa_d \tilde{\gamma} \ttau}}$ can evolve in the right way. In the remaining $m-1$ scalar-potential terms, one generally finds linear combinations of the field $\smash{\ttau}$ and other $n-1$ canonically-normalized scalar fields: the latter evolve with different slopes, but in such a way that all contributions to the scalar potential have a $\smash{(1/t^2)}$-behavior. As the scalar $\smash{\ttau}$ provides a measure of time, all scalar-potential terms evolve with an on-shell behavior that is captured by $\smash{t = t_\infty \, \e^{ \kappa_d \tilde{\gamma} \, (\ttau_* - \ttau_\infty) / 2}}$. For this reason, we can also express the Hubble scale as
\begin{equation} \label{Hubble-scale evolution}
    l_H = \dfrac{1}{H} = l_\infty \, \e^{\frac{1}{2} \, \kappa_d \tilde{\gamma} \ttau_*},
\end{equation}
where $\smash{l_\infty = (t_\infty / p) \, \e^{- \frac{1}{2} \, \kappa_d \tilde{\gamma} \ttau_\infty}}$ is the initial value. It is important to stress that, although one may always define a scalar $\smash{\ttau}$ that appears alone in one scalar-potential term, all scalar-potential terms participate in the cosmological evolution as they fall off over time in the same way and therefore contribute with the same weight to the total energy density.

Another possible field redefinition that one can make is that of a scalar $\smash{\tvarphi}$ that is aligned with the field trajectory in the moduli space. Once the trajectories $\smash{\phi_*^a}$ have been identified, this can be defined by an $\mathrm{O}(n)$-rotation where $\smash{\tvarphi}$ is parallel to the vector $\smash{\theta_*^a}$ and the remaining fields $\smash{\check{\phi}^{\check{a}}}$ are orthogonal to it, with the $\smash{\check{a}}$-index not including $\smash{\tvarphi}$. All the scalar-potential terms can then be written as
\begin{align*}
    V_i = \Lambda_i \, \e^{- \kappa_d \gamma_* \tvarphi - \kappa_d \check{\gamma}_{i \check{a}} \check{\phi}^{\check{a}}},
\end{align*}
where $\smash{\gamma_*}$ is the  directional derivative of eq. (\ref{directional gamma}) and the coefficients $\smash{\check{\gamma}_{i \check{a}}}$ are instead defined by the inverse rotation. By definition, the time-evolution of $\smash{\tvarphi}$ must be
\begin{equation} \label{quintessence-field time evolution}
    \tvarphi_*(t) = \tvarphi_\infty + \dfrac{1}{\kappa_d} \dfrac{2}{\gamma_*} \; \mathrm{ln} \, \dfrac{t}{t_\infty},
\end{equation}
with the other fields being constants $\smash{\check{\phi}_*^{\check{a}} = \check{\phi}_\infty^{\check{a}}}$. This guarantees again the right scalar-potential evolution. In particular, all the fields $\smash{\check{\phi}_*^{\check{a}}}$ can be absorbed into redefinitions of the constants $\Lambda_i$, so as to have a total potential of the form
\begin{align}
    V = \Lambda \, \e^{- \kappa_d \gamma_* \tvarphi_*}.
\end{align}
Here, we stress that neither the constant $\smash{\Lambda}$ nor the coefficient $\smash{\gamma_*}$ can be read off simply from the dimensional reduction of a single dominating term and that this identity is only true on the field-equation solutions. Evidently, one can also use $\smash{\tvarphi}$ to measure time instead of $\smash{\ttau}$.

Consider the 10-dimensional Einstein-frame scalar potential $\smash{\hat{V}_{(10)} = \hat{\Lambda}_{(10)} \, \e^{\hat{\kappa}_{10} \hat{\gamma}_{(10)} \hat{\sdil}_{(10)}}}$, where $\smash{\hat{\sdil}_{(10)}}$ is the canonically-normalized dilaton and $\smash{\hat{\kappa}_{10}}$ defines the gravitational coupling. Upon dimensional reduction, in terms of the canonically-normalized 10-dimensional dilaton $\smash{\tsdil}$ and Einstein-frame radion $\smash{\tomega}$, this corresponds to the $d$-dimensional Einstein-frame potential
\begin{align*}
    V_{(10)} = \Lambda_{(10)} \, \e^{\kappa_d \hat{\gamma}_{(10)} \tsdil - \frac{1}{\sqrt{2}} \frac{\sqrt{10-d}}{\sqrt{d-2}} \, \kappa_d \tomega} = \Lambda_{(10)} \, \e^{- \kappa_d \tilde{\gamma} \ttau},
\end{align*}
where the field $\smash{\ttau}$ has been defined as in eq. (\ref{clock-field definition}). For multi-field models with multi-exponential potentials, one cannot relate $\smash{\tilde{\gamma}}$ to the $\epsilon$-parameter without knowledge of the other scalar-potential terms, as anticipated above. One can only write the on-shell identities
\begin{align*}
    \gamma_* (\tvarphi_* - \tvarphi_\infty) = \gamma_{ia} (\phi_*^a - \phi_\infty^a) = \tilde{\gamma} (\ttau_* - \ttau_\infty),
\end{align*}
but the determination of the constant $\smash{\gamma_*}$ and the definition of the field $\smash{\tvarphi_*}$ cannot disregard the other exponential potentials, if any. If the dilaton is among the rolling scalars contributing non-trivially to the bound in eq. (\ref{epsilon bound}), then eqs. (\ref{dilaton-gamma lower bound}, \ref{directional gamma}) imply the constraint
\begin{equation}
    \gamma_* \geq \dfrac{2}{\sqrt{d-2}}.
\end{equation}
On the other hand, as the value $\smash{\tilde{\gamma} = 2 / \sqrt{d-2}}$ is preserved under dimensional reduction, one is led to conjecture that this should be the minimum possible value \cite{Rudelius:2021oaz, Rudelius:2022gbz}. For a single-term potential, the coefficient thus computed is also the one that controls the acceleration rate, i.e. $\smash{\gamma_* = \tilde{\gamma}}$, and the minimum value argued in terms of dimensional reductions corresponds to the minimum value that is allowed in a string-theoretic effective action. For multi-exponential potentials, this is not the case. We also note that the scalar considered in this argument, i.e. $\smash{\ttau = - \tdelta + (\sqrt{d-2} / \sqrt{10-d}) \, \tsigma}$, corresponds to the combination appearing in curvature-induced potentials.

For scaling solutions, the Hubble scale and the Kaluza-Klein scale are arranged in a specific hierarchy. On the one hand, the Hubble-scale evolution is in eq. (\ref{Hubble-scale evolution}). On the other hand, in an isotropic compactification, the Kaluza-Klein scale evolves as
\begin{equation} \label{KK-scale evolution}
    l_{\mathrm{KK}, d} = \Bigl( \dfrac{4 \pi}{g_s^2} \Bigr)^{\frac{1}{d-2}} \, l_{\p, d} \, \e^{- \frac{\kappa_d \tdelta}{\sqrt{d-2}} + \frac{\kappa_d \tsigma}{\sqrt{10-d}}}.
\end{equation}
In particular, the moduli dependence of the Kaluza-Klein scale is proportional to the moduli dependence of the potential induced by a non-trivial curvature $\smash{\breve{R}_{(10-d)}}$, as computed in terms of a fiducial metric with a string-size fiducial volume. Because this potential, if present, can be used to measure the Hubble scale, we can write $\smash{2 \tdelta / \sqrt{d-2} - 2 \tsigma / \sqrt{10-d} = - \tilde{\gamma} \ttau}$ and thus find
\begin{equation} \label{no scale separation}
    \dfrac{l_H^2}{l_{\mathrm{KK}, d}^2} = \dfrac{\bigl[ (d-1) \, p - 1 \bigr] \, (d-2)}{p \xi \, (- l_s^2 \breve{R}_{(10-d)})}.
\end{equation}
Here, $\xi = V / V_R$ is the order-1 ratio between the on-shell total potential $V$ and the curvature-induced potential $V_R$. This means that the theory is by no means genuinely $d$-dimensional. If the internal curvature is trivial, the details of the dilaton and radion time evolution are necessary to assess the ratio.

\section{Discussion} \label{sec: discussion}

In this letter we reveal a very simple but generic bound for the $\epsilon$-parameter of any cosmological model where the potential is a sum of exponentials. This is a restriction that emerges at late times since the solutions to the cosmological equations exhibit universal asymptotic features. Due to this bound, we show that the idea of our universe approaching an asymptotic region of the field space at late time \cite{Conlon:2022pnx, Rudelius:2022gbz, Calderon-Infante:2022nxb} is difficult to realize. Solutions with negatively-curved spatial slices \cite{Marconnet:2022fmx} are not constrained by our bound.

The bound that we derived provides us with a necessary condition to quickly diagnose whether cosmic acceleration is even possible at all, with a much wider applicability than previous studies \cite{Rudelius:2021azq, Rudelius:2022gbz, Calderon-Infante:2022nxb}; we also stress that, even if the bound does not rule out cosmic acceleration, one then needs to identify whether the expansion is actually accelerated. The fact that the presence of negative-definite terms in the potential relaxes the bounds on acceleration resonates with other well-known no-go results \cite{Gibbons:1984kp, deWit:1986mwo, Maldacena:2000mw}, but it sharpens the specific situation of multi-field multi-exponential potentials as it provides a quantitative restriction. Another result is that the slow-roll approximation can also be ruled out by our no-go result. This conclusion is in line with previous findings in the string-theory literature \cite{Hertzberg:2007wc, Garg:2018reu, Garg:2018zdg, Cicoli:2021fsd, Andriot:2022xjh, Hebecker:2019csg, ValeixoBento:2020ujr}, although such claims are based on the gradient and the curvature of the potential, which, as we have discussed, are not the right criteria for cosmic acceleration in general. On the other hand, transient solutions -- viable for inflation -- are not within the reach of the current analysis.

In particular, our bound represents an important no-go result that shows a generic difficulty for string-theoretic models to entail accelerated expansion at late times. To be at weak string coupling, then either the dilaton has to be stabilized or there need to be more than two terms in the potential, with at least a positive- and a negative-definite one. If the dilaton is not stabilized, acceleration is generally favored in theories in which many terms of both signs appear in the scalar potential -- which can be the case, for instance, for potentials generated by string-loop corrections in non-supersymmetric theories.

For scaling solutions, we also show that the $\epsilon$-parameter is in an exact correspondence with the normalized directional derivative of the scalar potential along the field-space trajectories. Noticeably, curvature-induced scalar potentials stand out as special, when a scaling solution is possible: they correspond to minimal tree-level potentials that provide the dilaton bound, and, whenever they are present, they prevent a parametric separation of the Kaluza-Klein and Hubble scales.

Intriguing future directions of research regard the fate of scaling solutions as the inevitable attractors of the cosmological equations. Although there are several hints that this may be the case, a general proof is unknown \cite{Shiu:2023rxt}. This would substantiate our claims on scaling solutions not just as possible solutions, but as the inevitable late-time solutions. A study of the attractor mechanisms towards scaling solutions along the lines of our analysis can also make it possible to quantify the duration of transient accelerated phases identified as early inflationary epochs. A conjectured string-theoretic feature is the possible presence of a large number of unstabilized complex-structure moduli \cite{Bena:2020xrh, Grana:2022dfw}: although it may be tempting to consider such fields among the rolling scalars at the boundary of moduli space, they are expected to have a tendency to raise the value of the constant $\smash{(\gamma_\infty})^2$, which is additive, thus contributing to deceleration. Recent studies of the flux potential in the asymptotic region of the complex structure moduli space \cite{Grimm:2019ixq} may allow us to confirm this expectation. Finally, our findings point to the importance of dynamical (rather than just kinematical) considerations in the Swampland Program \cite{Landete:2018kqf, Bedroya:2019snp, Apers:2022cyl}.

\vspace{6pt}

\textbf{Note added.} Shortly after this work appeared as a preprint, we continued the program of analytical characterization of late-time cosmological solutions in ref. \cite{Shiu:2023rxt}. Here, we defined a convex-hull formulation of our bound and we proved sufficient conditions for scaling cosmologies to be inevitable late-time attractors.

\onecolumngrid
\newpage

\appendix

\section{Late-time bounds on cosmological autonomous systems} \label{app: late-time bounds on cosmological autonomous systems}

Let $\smash{V = \sum_{i = 1}^m \Lambda_i \, \e^{- \kappa_d \gamma_{i a} \phi^a}}$ be the scalar potential for the canonically-normalized scalar fields $\phi^a$, with $a=1,\dots,n$, in a $d$-dimensional FLRW-background: one can reduce the cosmological scalar-field and Friedmann equations to a system of autonomous equations \cite{Halliwell:1986ja, Copeland:1997et, Coley:1999mj, Guo:2003eu}. In terms of the variables
\begin{align*}
    x^a & = \dfrac{\kappa_d}{\sqrt{d-1} \sqrt{d-2}} \, \dfrac{\dot{\phi}^a}{H}, \\
    y^i & = \dfrac{\kappa_d \sqrt{2}}{\sqrt{d-1} \sqrt{d-2}} \, \dfrac{1}{H} \, \sqrt{\Lambda_i \, \e^{- \kappa_d \gamma_{i a} \phi^a}},
\end{align*}
and defining for simplicity
\begin{align*}
    f & = (d-1) H, \\
    c_{i a} & = \dfrac{1}{2} 
    \dfrac{\sqrt{d-2}}{\sqrt{d-1}} \, \gamma_{ia},
\end{align*}
the cosmological equations can indeed be expressed as
\begin{subequations}
\begin{align}
    \dot{x}^a & = \biggl[ -x^a (y)^2 + \sum_{i=1}^m {c_i}^{a} (y^i)^2 \biggr] \, f, \label{x-equation} \\
    \dot{y}^i & = \bigl[ (x)^2 - c_{ia} x^a \bigr] \, y^i f, \label{y-equation}
\end{align}
\end{subequations}
jointly with the two conditions
\begin{subequations}
\begin{align}
    & \dfrac{\dot{f}}{f^2} = - (x)^2, \label{f-equation} \\
    & (x)^2 + (y)^2 = 1. \label{sphere-condition}
\end{align}
\end{subequations}
(Here, the position of the $i$- and $a$-indices is arbitrary, since the former are just dummy labels and the latter refer to a field-space metric that is a Kronecker delta; if they appear together, such as in the coefficients $\smash{c_{ia}}$ and $\smash{{c_i}^a}$, we always write them on the left and on the right, respectively, to avoid confusion, and we place the $a$-index in the same (upper or lower) position as the other $x^a$-types variable that appear in the expression of reference. Einstein summation convention is understood for the metric-contraction on the $a$-indices and moreover we use the shorthand notations $\smash{(x)^2 = x_a x^a}$ and $\smash{(y)^2 = \sum_{i=1}^m (y^i)^2}$.)

Below are a series of mathematical results. A formulation in terms of physical observables and an analysis of their implications is in the main text. Perturbative analyses of the late-time behavior of autonomous systems in cosmological scenarios appear in refs. \cite{Malik:1998gy, Coley:1999mj, Guo:2003eu, Kim:2005ne, Hartong:2006rt}. \\

Let the unknown functions be such that $x^a \in [-1, 1]$ and $y^i \in [-1,1]$ and let $c^a$ denote the minimum of the constant parameters for each $a$-index, i.e. $\smash{c^a = \min_i {c_i}^a}$. For each $a$-index, if $c^a > 0$, let $\smash{c_\infty^a = c^a}$; if $c^a \leq 0$, then let $\smash{c_\infty^a = 0}$. Let $t_0$ be the initial time.

\vspace{8pt}
\begin{lemma} \label{lemma: f - lower bound}
    For $t > t_0$, one has
    \begin{equation} \label{eq.: f - lower bound}
        f(t) \geq \dfrac{1}{t-t_0 + \dfrac{1}{f(t_0)}}.
    \end{equation}
    This also implies that $\smash{\int_{t_0}^\infty \de t \, f(t) = \infty}$.
\end{lemma}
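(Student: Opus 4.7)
The plan is to turn the given ODE constraint into a differential inequality for $1/f$ and integrate. The starting observation is that the sphere condition $(x)^2 + (y)^2 = 1$ together with $y^i \in [-1,1]$ forces $(x)^2 \leq 1$. Plugging this into equation (\ref{f-equation}) gives the one-sided bound $\dot{f} \geq -f^2$, which is the only dynamical input I need.

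Next I would assume that $f$ remains strictly positive on $[t_0,\infty)$, so that dividing by $f^2$ is legitimate; this corresponds physically to an expanding universe ($H>0$) and is consistent with the hypothesis $f(t_0)>0$ implicit in the statement. Writing $\dot{f}/f^2 \geq -1$ as $\frac{\de}{\de t}\bigl(1/f\bigr) \leq 1$ and integrating from $t_0$ to $t$ yields
\begin{equation*}
    \dfrac{1}{f(t)} - \dfrac{1}{f(t_0)} \leq t - t_0,
\end{equation*}
which rearranges to exactly the bound in eq. (\ref{eq.: f - lower bound}). The divergence of $\smash{\int_{t_0}^\infty \de t \, f(t)}$ then follows by direct comparison with $\smash{\int_{t_0}^\infty \de t \, \bigl[ t - t_0 + 1/f(t_0) \bigr]^{-1}}$, which is a logarithm and hence diverges at infinity.

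The only subtle point — and the one step I would be careful to justify — is ruling out that $f$ crosses zero or blows up before the claimed bound can be applied. Since $\dot{f} = -(x)^2 f^2$ is non-positive, $f$ is monotonically non-increasing, so $f$ cannot blow up to $+\infty$; and the differential inequality $\dot{f} \geq -f^2$ rules out finite-time collapse to zero because the comparison ODE $\dot{g}=-g^2$ with $g(t_0)=f(t_0)>0$ has the global positive solution $g(t) = \bigl[t-t_0 + 1/f(t_0)\bigr]^{-1}$, and a standard comparison argument gives $f(t) \geq g(t) > 0$ for all $t>t_0$. With positivity secured on the whole half-line, the integration step above is valid for all $t>t_0$, and both claims of the lemma follow.
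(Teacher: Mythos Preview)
Your proof is correct and follows the same route as the paper: both use $(x)^2 \leq 1$ in eq.~(\ref{f-equation}) to obtain $\dot f/f^2 \geq -1$, then integrate (the paper phrases this as comparing with the explicit solution $f_-$ of $\dot f_-/f_-^2 = -1$, which is the same computation you write as $\tfrac{\de}{\de t}(1/f) \leq 1$). Your added care about positivity of $f$ and the explicit comparison argument for the divergence of $\int_{t_0}^\infty f$ are not spelled out in the paper's proof but are sound and welcome.
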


\begin{proof}
Let $\smash{f_-}$ be a function such that $\smash{\dot{f}_- / (f_-)^2 = - 1}$. A simple integration gives $\smash{f_- = 1 / [t - t_0 + 1/f_-(t_0)]}$. Because $\smash{(x)^2 \leq 1}$, eq. (\ref{f-equation}) gives $\smash{\dot{f}/f^2 \geq \dot{f}_- / (f_-)^2}$; by integrating the condition $\smash{\de f / f^2 \geq \de f_- / (f_-)^2}$, one immediately gets eq. (\ref{eq.: f - lower bound}).
\end{proof}

\vspace{8pt}
\begin{lemma} \label{lemma: x - lower bound}
    If $\smash{c^a \leq 1}$ and if $\smash{\int_{t_0}^\infty \de s \; f(s) \bigl[ y(s) \bigr]^2 = \infty}$, then one has
    \begin{equation} \label{eq.: x - lower bound}
        \liminf_{t \to \infty} x^a(t) \geq c^a.
    \end{equation}
\end{lemma}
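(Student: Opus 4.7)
The plan is to turn the ODE (\ref{x-equation}) for $x^a$ into a one-sided differential inequality in which $c^a$ appears as an attracting target, and then use a Lyapunov-type integration argument driven by the hypothesis $\smash{\int_{t_0}^\infty \de s \, f(y)^2 = \infty}$. First I would exploit the definition $\smash{c^a = \min_i {c_i}^a}$ together with $\smash{(y^i)^2 \geq 0}$ to get the pointwise bound $\smash{\sum_i {c_i}^a (y^i)^2 \geq c^a (y)^2}$. Substituted into eq. (\ref{x-equation}), this yields the clean differential inequality
\begin{equation*}
    \dot{x}^a \geq (c^a - x^a)\,(y)^2 f,
\end{equation*}
which already expresses the geometric intuition: whenever $x^a$ drops below $c^a$, it is pushed back up at a rate controlled by $(y)^2 f$.

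The natural next step is to study the positive shortfall $v(t) = [c^a - x^a(t)]_+$, which is Lipschitz continuous since $x^a$ is $C^1$. Away from the coincidence set $\{x^a = c^a\}$ it is differentiable, and almost everywhere one has $\dot{v} = -\dot{x}^a\,\mathbbm{1}_{\{x^a < c^a\}} \leq -v\,(y)^2 f$. Multiplying by the integrating factor $\smash{\psi(t) = \exp\bigl(\int_{t_0}^t \de s\, (y)^2 f\bigr)}$ makes the right-hand side vanish, giving $\frac{\de}{\de t}(v\psi) \leq 0$ a.e.; integrating from $t_0$ I get
\begin{equation*}
    0 \leq v(t) \leq v(t_0)\,\exp\!\Bigl(-\int_{t_0}^t \de s\,(y)^2 f\Bigr).
\end{equation*}
The divergence assumption on $\smash{\int_{t_0}^\infty \de s\,(y)^2 f}$ then forces $v(t)\to 0$, i.e. $\limsup_{t\to\infty}(c^a - x^a) \leq 0$, which is eq. (\ref{eq.: x - lower bound}).

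The hypothesis $c^a \leq 1$ is used implicitly to guarantee consistency with the sphere constraint (\ref{sphere-condition}), which restricts $x^a \in [-1,1]$: were $c^a > 1$, the very same inequality would drive $(x)^2$ above $1$, conflicting with (\ref{sphere-condition}) under the divergence assumption. It is convenient to note this at the start so that the bound $\smash{v(t_0) \leq 1 + |c^a|}$ is harmless.

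The main subtlety, and the only step that requires care rather than routine calculation, is the non-smoothness of the positive part $v$ at points where $x^a = c^a$. I would handle this by working in the absolutely continuous category (Rademacher-style chain rule, or equivalently by approximating $[\,\cdot\,]_+$ by smooth convex functions $\rho_\varepsilon$ with $\rho'_\varepsilon \in [0,1]$, deriving the inequality $\dot{(\rho_\varepsilon \circ (c^a - x^a))} \leq \rho'_\varepsilon(c^a-x^a)\cdot(c^a - x^a)(y)^2 f$, and taking $\varepsilon \to 0$). Once this technicality is dispatched, the integrating-factor step and the conclusion follow immediately; no appeal to the individual structure of the $y^i$ beyond $\smash{(y^i)^2 \geq 0}$ is needed.
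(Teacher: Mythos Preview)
Your proof is correct and follows essentially the same integrating-factor argument as the paper: both derive $\dot{x}^a \geq (c^a - x^a)\,(y)^2 f$ from $c^a = \min_i {c_i}^a$ and then integrate against $\exp\bigl(\int_{t_0}^t f(y)^2\,\de s\bigr)$. The only difference is that the paper applies the integrating factor directly to $x^a$, obtaining $x^a(t) \geq c^a + \e^{-\varphi(t)}\bigl[x^a(t_0) - c^a\bigr]$ in one line, which sidesteps entirely the non-smoothness issue you create by passing to the positive part $[c^a - x^a]_+$.
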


\begin{proof}
    In view of eq. (\ref{x-equation}), one can write the inequality
    \begin{align*}
        \dot{x}^a = \biggl[ -x^a (y)^2 + \sum_{i} {c_{i}}^{a} (y^i)^2 \biggr] \, f \geq \bigl[ -x^a + c^a \bigr] \, f (y)^2.
    \end{align*}
    Then, defining the functions $\smash{\xi(t) = f(t) \bigl[ y(t) \bigr]^2}$ and $\smash{\varphi(t) = \int_{t_0}^t \de s \; \xi(s)}$, one can write
    \begin{align*}
        \dfrac{\de}{\de t} \, \bigl[ \e^{\varphi(t)} x^a(t) \bigr] = \e^{\varphi(t)} \bigl[ \dot{x}^a(t) + x^a(t) \, \xi(t) \bigr] \geq c^a \, \e^{\varphi(t)} \xi(t).
    \end{align*}
    By integrating this inequality, one then finds the further inequality
    \begin{align*}
        \e^{\varphi(t)} x^a(t) - x^a(t_0) \geq  c^a \int_{t_0}^t \de s \; \e^{\varphi(s)} \xi(s) = c^a \, \bigl[ \e^{\varphi(t)} - 1 \bigr].
    \end{align*}
    If $\smash{c^a \leq 1}$ and if $\smash{\lim_{t \to \infty} \varphi(t) = \infty}$, then the conclusion in eq. (\ref{eq.: x - lower bound}) holds.
\end{proof}

\vspace{8pt}
\begin{corollary}
    If $c^a > 1$ for at least one $a$-index, then $\smash{\int_{t_0}^\infty \de s \; f(s) \bigl[ y(s) \bigr]^2 < \infty}$.
\end{corollary}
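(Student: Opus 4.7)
The plan is to argue by contradiction, reusing the integration-factor machinery from the proof of Lemma \ref{lemma: x - lower bound}. Suppose that $\smash{\int_{t_0}^\infty \de s \, f(s) [y(s)]^2 = \infty}$, i.e.\ that the function $\smash{\varphi(t) = \int_{t_0}^t \de s\, f(s) [y(s)]^2}$ diverges. I want to derive a conclusion about $\smash{x^a}$ that is incompatible with the sphere condition $\smash{(x)^2 + (y)^2 = 1}$, which in particular forces $\smash{x^a \in [-1,1]}$.

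The key observation is that the inequality
\begin{equation*}
    \dot{x}^a \geq \bigl[ -x^a + c^a \bigr] \, f \, (y)^2
\end{equation*}
derived in the proof of Lemma \ref{lemma: x - lower bound} does not actually use the assumption $\smash{c^a \leq 1}$: it follows purely from $\smash{{c_i}^a \geq c^a}$ and $\smash{(y^i)^2 \geq 0}$, so it is valid for the particular $a$-index on which $c^a > 1$. Multiplying by the integrating factor $\smash{\e^{\varphi(t)}}$ and integrating on $[t_0,t]$, exactly as before, gives
\begin{equation*}
    \e^{\varphi(t)} x^a(t) - x^a(t_0) \geq c^a \, \bigl[ \e^{\varphi(t)} - 1 \bigr],
\end{equation*}
hence $\smash{x^a(t) \geq c^a - (c^a - x^a(t_0)) \, \e^{-\varphi(t)}}$.

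Now I invoke the assumed divergence $\smash{\varphi(t) \to \infty}$ to conclude $\smash{\liminf_{t \to \infty} x^a(t) \geq c^a}$. Combined with the sphere condition, which yields $\smash{x^a(t) \leq 1}$ for all $t$, this forces $\smash{c^a \leq 1}$, contradicting the hypothesis $c^a > 1$. Therefore the assumption $\smash{\varphi(t) \to \infty}$ must fail, establishing the claimed integrability. The main conceptual point — and the only subtlety — is recognizing that Lemma \ref{lemma: x - lower bound}'s proof in fact yields $\smash{\liminf x^a \geq c^a}$ irrespective of the sign/magnitude of $c^a$, and that the bound $c^a \leq 1$ was merely imposed there for consistency with $\smash{x^a \leq 1}$; dropping it turns Lemma \ref{lemma: x - lower bound} into the contrapositive statement we want.
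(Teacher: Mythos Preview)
Your proof is correct and follows essentially the same approach as the paper: both argue by contradiction, noting that the integration-factor estimate from Lemma~\ref{lemma: x - lower bound} yields $\liminf_{t\to\infty} x^a(t) \geq c^a$ whenever $\varphi(t)\to\infty$, which is incompatible with the sphere condition when $c^a>1$. The paper states this more tersely, but the content is identical.
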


\begin{proof}
    If $c^a > 1$, then eq. (\ref{eq.: x - lower bound}) cannot hold by eq. (\ref{sphere-condition}), which forces the inequality $(x)^2 \leq 1$ and therefore the inequality $(x^a)^2 \leq 1$. Therefore, one cannot have $\smash{\lim_{t \to \infty} \varphi(t) = \infty}$.
\end{proof}

\vspace{8pt}
\begin{corollary} \label{corollary: f - upper bound}
    If $(c_\infty)^2 \leq 1$, and if $\smash{\int_{t_0}^\infty \de s \; f(s) \bigl[ y(s) \bigr]^2 = \infty}$, for $t > t_\infty$, where $t_\infty$ is a time $t_\infty > t_0$, one has
    \begin{equation} \label{eq.: f - upper bound}
        f(t) \leq \dfrac{1}{[c(\varphi_\infty)]^2 \, (t-t_\infty) + \dfrac{1}{f(t_\infty)}}.
    \end{equation}
    Here, $\smash{c(\varphi_\infty)}$ is a quantity such that $\smash{\lim_{t_\infty \to \infty} c(\varphi_\infty) = c_\infty}$.
\end{corollary}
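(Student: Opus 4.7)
The plan is to bound $(x(t))^2$ from below by a quantity approaching $(c_\infty)^2$ at late times, and then integrate the $f$-equation. First, I would verify that the hypotheses of Lemma \ref{lemma: x - lower bound} apply component by component. Since $\smash{(c_\infty)^2 = \sum_a (c_\infty^a)^2 \leq 1}$, each $\smash{|c_\infty^a| \leq 1}$; and by the definition $\smash{c_\infty^a = \max(c^a, 0)}$, this forces $\smash{c^a \leq 1}$ for every $a$ (trivially in the case $\smash{c^a \leq 0}$). Together with the assumed divergence of $\smash{\int_{t_0}^\infty \de s \, f(s) [y(s)]^2}$, Lemma \ref{lemma: x - lower bound} yields $\smash{\liminf_{t \to \infty} x^a(t) \geq c^a}$ for each $a$.

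Next, I would convert these coordinate-wise liminf bounds into a uniform bound on $(x)^2$. Fix $\smash{\epsilon' > 0}$. For the indices with $\smash{c^a > 0}$, I can pick $\delta > 0$ small enough that $\smash{(c^a - \delta)^2 \geq (c_\infty^a)^2 - \epsilon'/n}$, and then choose $t_\infty$ large so that for all $\smash{t > t_\infty}$ and all such $a$, $\smash{x^a(t) > c^a - \delta}$, giving $\smash{(x^a(t))^2 \geq (c_\infty^a)^2 - \epsilon'/n}$. For the remaining indices where $\smash{c^a \leq 0}$, so that $\smash{c_\infty^a = 0}$, the trivial bound $\smash{(x^a(t))^2 \geq 0 = (c_\infty^a)^2}$ suffices. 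Summing over $a$ gives $\smash{(x(t))^2 \geq (c_\infty)^2 - \epsilon'}$ for all $\smash{t > t_\infty}$, and I define $\smash{[c(\varphi_\infty)]^2 := (c_\infty)^2 - \epsilon'(t_\infty)}$, where $\smash{\epsilon'(t_\infty) \to 0}$ as $\smash{t_\infty \to \infty}$; equivalently, since $\varphi(t)$ is nondecreasing and diverges, as $\varphi_\infty \to \infty$.

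Finally, I would rewrite eq. (\ref{f-equation}) in the form
\begin{equation*}
    \dfrac{\de}{\de t} \, \dfrac{1}{f(t)} = (x(t))^2 \geq [c(\varphi_\infty)]^2 \qquad \text{for } t > t_\infty,
\end{equation*}
and integrate from $t_\infty$ to $t$ to obtain $\smash{1/f(t) \geq [c(\varphi_\infty)]^2 \, (t-t_\infty) + 1/f(t_\infty)}$, which is equivalent to eq. (\ref{eq.: f - upper bound}) after taking reciprocals (note that $f > 0$ is implied by the lower bound of Lemma \ref{lemma: f - lower bound}).

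The main obstacle is the second step: Lemma \ref{lemma: x - lower bound} produces only a liminf on each $x^a$ separately, and these bounds can be negative when $c^a \leq 0$, so one cannot naively square and add. The careful point is to use the sharp bound $\smash{(x^a)^2 \geq (c_\infty^a)^2}$ only for those $a$ with $c^a > 0$ (with a small controlled loss $\epsilon'/n$), while discarding the other components to zero; this is exactly what makes $c_\infty^a$ (rather than $c^a$) the right quantity appearing in the final constant, and ensures the shift function $\smash{c(\varphi_\infty)}$ converges monotonically to $c_\infty$ from below as $t_\infty \to \infty$.
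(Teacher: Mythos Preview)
Your argument is correct and follows essentially the same route as the paper: verify $c^a \leq 1$ for each $a$, obtain a late-time lower bound $x^a(t) \geq c^a - (\text{small})$ for the components with $c^a>0$, discard the others using $(x^a)^2 \geq 0$, and then integrate $\de(1/f)/\de t = (x)^2$. The only cosmetic difference is that the paper extracts the explicit quantitative inequality $x^a(t) \geq c^a - \e^{-\varphi(t)}\,|x^a(t_0)-c^a|$ from inside the proof of Lemma~\ref{lemma: x - lower bound} and defines $c(\varphi_\infty)^a = c^a - \e^{-\varphi_\infty}|x^a(t_0)-c^a|$, which makes the dependence on $\varphi_\infty$ (rather than merely on $t_\infty$) manifest; your $\epsilon$--$\delta$ packaging of the liminf statement achieves the same end.
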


\begin{proof}
    If $(c_\infty)^2 \leq 1$, then $\smash{c^a \leq 1}$ for all $a$-indices. For an arbitrarily large number $\varphi_\infty$, there exists a time $t_\infty$ such that $\varphi(t) > \varphi_\infty$ for $t > t_\infty$. Then, because one can write
    \begin{align*}
        x^a(t) \geq c^a + \e^{-\varphi(t)} \bigl[ x^a(t_0) - c^a \bigr] \geq c^a - \e^{-\varphi(t)} \bigl\ab x^a(t_0) - c^a \bigr\ab,
    \end{align*}
    for $t > t_\infty$, it is also possible to write the inequality
    \begin{align*}
        x^a(t) \geq c^a - \e^{-\varphi(t)} \bigl\ab x^a(t_0) - c^a \bigr\ab \geq c^a - \e^{-\varphi_\infty} \bigl\ab x^a(t_0) - c^a \bigr\ab.
    \end{align*}
    If $\smash{c^a = c_\infty^a > 0}$, let $\smash{c(\varphi_\infty)^a = c^a - \e^{-\varphi_\infty} \bigl\ab x^a(t_0) - c^a \bigr\ab}$; if $\smash{c^a \leq 0}$, let $\smash{c(\varphi_\infty)^a = c_\infty^a = 0}$. Thus, if $(c_\infty)^2 \leq 1$, which implies $[c(\varphi_\infty)]^2 \leq (c_\infty)^2 \leq 1$, for $t > t_\infty$, the inequality holds $(x)^2 \geq [c(\varphi_\infty)]^2$. Then, let $\smash{f_\infty}$ be a function such that $\smash{\dot{f}_\infty / f_\infty^2 = - [c(\varphi_\infty)]^2}$. A simple integration gives $\smash{f_\infty = 1 / [[c(\varphi_\infty)]^2 (t - t_\infty) + 1/f_\infty(t_\infty)]}$. Because $\smash{- (x)^2 \leq - [c(\varphi_\infty)]^2}$ for $t > t_\infty$, eq. (\ref{f-equation}) gives $\smash{\dot{f}/f^2 \leq \dot{f}_\infty / f_\infty^2}$; by integrating the condition $\smash{\de f / f^2 \leq \de f_\infty / f_\infty^2}$, one immediately gets eq. (\ref{eq.: f - upper bound}).
\end{proof}

\vspace{8pt}
\begin{corollary}
    If $(c_\infty)^2 > 1$, then $\smash{\int_{t_0}^\infty \de s \; f(s) \bigl[ y(s) \bigr]^2 < \infty}$.
\end{corollary}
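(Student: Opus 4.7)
My plan is to prove this by contradiction, leveraging the two previous lemmas and the sphere condition as the source of the contradiction. Suppose $\smash{\int_{t_0}^\infty \de s \, f(s) [y(s)]^2 = \infty}$; I will show this is incompatible with $(c_\infty)^2 > 1$.

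First, I would dispose of the easy sub-case: if $c^a > 1$ for some $a$, the previous corollary already gives convergence of the integral, contradicting the assumption. So I may restrict attention to the case where $c^a \leq 1$ for every $a$. In this regime the hypotheses of Lemma \ref{lemma: x - lower bound} hold component-wise, yielding
\begin{equation*}
    \liminf_{t \to \infty} x^a(t) \geq c^a \qquad \text{for all } a.
\end{equation*}

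Next, I would split the indices according to the sign of $c^a$ in order to pass from a bound on $x^a$ to a bound on $(x^a)^2$. For indices with $c^a > 0$, the lower bound on $x^a$ gives $\liminf_{t\to\infty}(x^a(t))^2 \geq (c^a)^2 = (c_\infty^a)^2$; for indices with $c^a \leq 0$, we trivially have $(x^a(t))^2 \geq 0 = (c_\infty^a)^2$ by the definition of $c_\infty^a$. Summing in $a$ and using Fatou-type subadditivity of $\liminf$ (here just a finite sum), I obtain
\begin{equation*}
    \liminf_{t \to \infty} (x(t))^2 \geq \sum_a (c_\infty^a)^2 = (c_\infty)^2 > 1,
\end{equation*}
which is incompatible with the constraint $(x(t))^2 + (y(t))^2 = 1$ forced by eq. (\ref{sphere-condition}), since that requires $(x(t))^2 \leq 1$ for all $t$. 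This contradicts the divergence assumption, completing the proof.

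The only delicate point is the sign-split step: one has to be careful that the $\liminf$ bound from Lemma \ref{lemma: x - lower bound} really translates into a lower bound on $(x^a)^2$, which requires using the definition $c_\infty^a = \max\{c^a, 0\}$ rather than $|c^a|$. Since the negative-$c^a$ components contribute zero to $(c_\infty)^2$ by construction, no information is lost and the contradiction with the sphere constraint goes through cleanly; there is no genuine obstacle beyond this bookkeeping.
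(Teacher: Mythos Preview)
Your proof is correct and follows essentially the same contradiction argument as the paper: assume divergence of the integral, invoke the lower bound $\liminf_{t\to\infty} x^a(t) \geq c^a$ from Lemma~\ref{lemma: x - lower bound}, translate it into $\liminf_{t\to\infty}(x)^2 \geq (c_\infty)^2 > 1$, and contradict eq.~(\ref{sphere-condition}). The paper's own proof compresses all of this into one line (``one can write $(x)^2 \geq (c_\infty)^2 > 1$''), whereas you spell out the sign-split and the case $c^a>1$ explicitly; the content is the same.
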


\begin{proof}
    By contradiction, let $\smash{\lim_{t \to \infty} \varphi(t) = \infty}$. Then one can write $\smash{(x)^2 \geq (c_\infty)^2 > 1}$, which is impossible by eq. (\ref{sphere-condition}). Therefore, one cannot have $\smash{\lim_{t \to \infty} \varphi(t) = \infty}$.
\end{proof}

\vspace{8pt}
\begin{lemma} \label{lemma: non-proper attractors}
    If $\smash{\int_{t_0}^\infty \de s \; f(s) \bigl[ y(s) \bigr]^2 < \infty}$, then
    \begin{subequations}
    \begin{align}
        \lim_{t \to \infty} x^a(t) & = \tilde{x}^a,  \label{eq.: unique non-proper x-attractor} \\
        \lim_{t \to \infty} y^i(t) & = 0.  \label{eq.: unique non-proper y-attractor}
    \end{align}
    \end{subequations}
    Note in particular that $\smash{(\tilde{x})^2 = 1}$.
\end{lemma}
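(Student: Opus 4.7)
The key strategic move is to rescale time by $f$ so as to convert the autonomous system into one with a uniformly bounded right-hand side and to turn the integral hypothesis into a plain $L^1$-condition. Concretely, I would introduce $\tau(t) = \int_{t_0}^t \de s \, f(s)$. Since $f > 0$ and $\int_{t_0}^\infty \de s \, f(s) = \infty$ by lemma \ref{lemma: f - lower bound}, the map $\tau$ is a $C^1$ diffeomorphism from $[t_0, \infty)$ onto $[0, \infty)$. Under this reparameterization, eqs.~(\ref{x-equation}) and (\ref{y-equation}) take the form $\de x^a/\de \tau = -x^a (y)^2 + \sum_i {c_i}^a (y^i)^2$ and $\de y^i/\de \tau = [(x)^2 - c_{ia} x^a] \, y^i$, with the sphere constraint (\ref{sphere-condition}) preserved, and the hypothesis becomes $\int_0^\infty \de \tau \, [y(\tau)]^2 < \infty$.

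To establish (\ref{eq.: unique non-proper y-attractor}), I would exploit the fact that on the unit sphere $(x)^2 + (y)^2 = 1$ the right-hand sides of the $\tau$-equations are bounded in absolute value by a finite constant $K$ depending only on the $c_{ia}$, so that $\de (y)^2/\de \tau$ is uniformly bounded and $[y(\tau)]^2$ is non-negative and Lipschitz in $\tau$. A standard argument then forces $[y(\tau)]^2 \to 0$: if there were a sequence $\tau_n \to \infty$ with $[y(\tau_n)]^2 \geq \delta > 0$, Lipschitz continuity would give $[y]^2 \geq \delta/2$ on intervals of fixed positive length around each $\tau_n$, contradicting the $L^1$-bound. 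Since $(y^i)^2 \leq (y)^2$, this yields $y^i \to 0$ for every $i$.

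For the convergence of each $x^a$ to some constant $\overline{x}^a$, the $\tau$-equation for $x^a$ delivers an estimate of the form $|\de x^a/\de \tau| \leq C_a \, [y(\tau)]^2$ for a constant $C_a$ depending only on the $c_{ia}$. The integrability of $[y]^2$ in $\tau$ then shows that $\de x^a/\de \tau$ is itself in $L^1([0,\infty))$, whence $x^a(\tau)$ has finite total variation at infinity and converges to a limit $\overline{x}^a$. Reverting to the physical time via $\tau(t) \to \infty$ gives eq.~(\ref{eq.: unique non-proper x-attractor}), and the sphere condition (\ref{sphere-condition}) together with $(y)^2 \to 0$ yields $(\overline{x})^2 = 1$.

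I do not anticipate a genuinely difficult step. The only conceptual move is the time reparameterization; once made, the remaining argument reduces to two elementary one-dimensional facts --- a non-negative Lipschitz function integrable at infinity must tend to zero, and a $C^1$ function whose derivative is $L^1$ at infinity must converge --- so that the hardest part is merely the careful bookkeeping of the constants (ultimately depending only on the $c_{ia}$) that enter the Lipschitz and total-variation estimates.
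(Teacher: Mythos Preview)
Your proof is correct. The core estimate you use for $x^a$ --- that $|\de x^a/\de \tau| \leq C_a\, (y)^2$ with $(y)^2$ integrable, hence $x^a$ has a limit --- is exactly what the paper does in physical time, where it bounds $|\dot{x}^a| \leq (1+\overline{c}^a)\, f(y)^2$ and runs a Cauchy-sequence argument. The genuine difference is in how $(y)^2 \to 0$ (equivalently $(\overline{x})^2 = 1$) is obtained, and in the order of the argument. The paper first establishes the convergence of $x^a$, then assumes by contradiction that $(\overline{x})^2 < 1$ and uses the divergence of $\int f$ from lemma~\ref{lemma: f - lower bound} to force $\int f(y)^2 = \infty$. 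You instead pass to the clock $\tau = \int f$, prove $(y)^2 \to 0$ directly via a Barbalat-type lemma (nonnegative, Lipschitz in $\tau$, and $L^1$ in $\tau$), and then read off both $x^a \to \overline{x}^a$ and $(\overline{x})^2 = 1$ from the sphere constraint. Your reparameterization is the cleaner packaging: it hides $f$ completely and reduces everything to two textbook one-variable facts, at the modest cost of invoking the Barbalat argument rather than the paper's more elementary contradiction. Both routes use the same two inputs --- integrability of $f(y)^2$ and divergence of $\int f$ --- so neither is materially stronger.
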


\begin{proof}
    For any given $a$-index, if there exists a positive $c_{i a} > 0$, let $\smash{b_a = \max_i c_{ia} > 0}$, else let $\smash{b_a = 0 \geq \max_i c_{ia}}$. Then, the inequalities hold
    \begin{align*}
        \bigl[ -x^a + b^a \bigr] \, f (y)^2 \geq \dot{x}^a \geq \bigl[ -x^a + c^a \bigr] \, f (y)^2.
    \end{align*}
    If $\smash{\dot{x}^a \geq 0}$, then
    \begin{align*}
        \bigl\ab \dot{x}^a \bigr\ab \leq \bigl\ab -x^a + b^a \bigr\ab \, f (y)^2 \leq \bigl[ 1 + b^a \bigr] \, f (y)^2.
    \end{align*}
    If $\smash{\dot{x}^a < 0}$, then
    \begin{align*}
        \bigl\ab \dot{x}^a \bigr\ab \leq \bigl\ab -x^a + c^a \bigr\ab \, f (y)^2 \leq \bigl[ 1 + \ab c^a \ab \bigr] \, f (y)^2.
    \end{align*}
    Therefore, defining $\smash{\tilde{c}^a = \max \lbrace b^a, \ab c^a \ab \rbrace}$ one can write
    \begin{align*}
        \bigl\ab \dot{x}^a \bigr\ab \leq \bigl[ 1 + \tilde{c}^a \bigr] \, f (y)^2.
    \end{align*}
    Because $\smash{\int_{t_0}^\infty \de s \; f(s) \bigl[ y(s) \bigr]^2 < \infty}$, for any arbitrary constant $\epsilon^a > 0$, there exists a time $\smash{t_{\epsilon^a} > t_0}$ such that
    \begin{align*}
        \int_{t_{\epsilon^a}}^\infty \de s \; f(s) \bigl[ y(s) \bigr]^2 < \dfrac{\epsilon^a}{1 + \tilde{c}^a},
    \end{align*}
    therefore, for a time $t_*$ in the range $\smash{t > t_* > t_{\epsilon^a}}$, one can write
    \begin{align*}
        \bigl\ab x^a(t) - x^a(t_*) \bigr\ab = \biggl\ab \int_{t_*}^t \de s \; \dot{x}^a(s) \biggr\ab \leq \int_{t_*}^t \de s \; \bigl\ab \dot{x}^a(s) \bigr\ab \leq \bigl[ 1 + \tilde{c}^a \bigr] \int_{t_*}^t \de s \; f(s) \bigl[ y(s) \bigr]^2 < \epsilon^a.
    \end{align*}
    This means that $\smash{\lbrace x^a(t) \rbrace_{t \geq t_0}}$ is a Cauchy sequence, which implies the existence of the limit
    \begin{align*}
        \lim_{t \to \infty} x^a(t) = \tilde{x}^a.
    \end{align*}
    Furthermore, this implies the existence of the limit $\smash{\lim_{t \to \infty} \bigl[x(t)\bigr]^2 = (\tilde{x})^2}$. Therefore, for any arbitrary constant $\delta > 0$, there exists a time $\smash{t_{\delta} > t_0}$ such that $\smash{\bigl\ab \bigl[x(t)\bigr]^2 - (\tilde{x})^2 \bigr\ab < \delta}$ at all times $\smash{t > t_\delta}$. For all times $t > t_\delta$, one can thus write the inequality
    \begin{align*}
        \bigl[ x(t) \bigr]^2 < (\tilde{x})^2 + \delta.
    \end{align*}
    Assuming the condition $\smash{(\tilde{x})^2 < 1}$ to hold by contradiction, then one can fix the $\delta$-constant as $\smash{\delta = \bigl[ 1 - (\tilde{x})^2 \bigr]/2 > 0}$, which implies the inequality
    \begin{align*}
        \bigl[ x(t) \bigr]^2 < \dfrac{1 + (\tilde{x})^2}{2}.
    \end{align*}
    Therefore, one can write
    \begin{align*}
        \int_{t_\delta}^\infty \de s \; f(s) \bigl[ 1 - \bigl[x(s)\bigr]^2 \bigr] > \int_{t_\delta}^\infty \de s \; f(s) \bigl[ 1 - \dfrac{1 + (\tilde{x})^2}{2} \bigr] = \dfrac{\bigl[1 - (\tilde{x})^2 \bigr]}{2} \int_{t_\delta}^\infty \de s \; f(s) = \infty,
    \end{align*}
    due to eq. (\ref{eq.: f - lower bound}), which is inconsistent with the hypothesis. Therefore, one must have $\smash{(\tilde{x})^2 = 1}$.
\end{proof}

\vspace{12pt}

One can constrain $x^a$ further than in eq. (\ref{eq.: x - lower bound}) and consequently tighten the bound in eq. (\ref{eq.: f - upper bound}). Let $\smash{C_a = \max_i c_{i a}}$.

\vspace{8pt}
\begin{lemma} \label{lemma: x - upper bound}
    If $C^a \geq - 1$ and if $\smash{\int_{t_0}^\infty \de s \; f(s) \bigl[ y(s) \bigr]^2 = \infty}$, then one has
    \begin{equation} \label{eq.: x - upper bound}
        \limsup_{t \to \infty} x^a(t) \leq C^a.
    \end{equation}
\end{lemma}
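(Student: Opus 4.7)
The plan is to mirror the proof of Lemma \ref{lemma: x - lower bound} almost verbatim, swapping the minimum coefficient $\smash{c^a}$ for the maximum coefficient $\smash{C^a}$ and reversing every inequality. The key observation is that since $\smash{(y^i)^2 \geq 0}$ and $\smash{c_{i}{}^a \leq C^a}$ for each $i$, one can now bound the right-hand side of eq. (\ref{x-equation}) from above,
\begin{align*}
    \dot{x}^a = \biggl[ -x^a (y)^2 + \sum_{i} {c_{i}}^{a} (y^i)^2 \biggr] f \leq \bigl[ -x^a + C^a \bigr] f (y)^2,
\end{align*}
which is the exact upper-bound companion of the inequality used in the lower-bound lemma.

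Then I would reuse the same auxiliary functions $\smash{\xi(t) = f(t) \bigl[ y(t) \bigr]^2}$ and $\smash{\varphi(t) = \int_{t_0}^t \de s \; \xi(s)}$ and multiply through by the integrating factor $\smash{\e^{\varphi(t)}}$. This gives
\begin{align*}
    \dfrac{\de}{\de t} \, \bigl[ \e^{\varphi(t)} x^a(t) \bigr] = \e^{\varphi(t)} \bigl[ \dot{x}^a(t) + x^a(t) \, \xi(t) \bigr] \leq C^a \e^{\varphi(t)} \xi(t),
\end{align*}
which I would integrate from $t_0$ to $t$, using $\smash{\int_{t_0}^t \de s \; \e^{\varphi(s)} \xi(s) = \e^{\varphi(t)} - 1}$, to produce
\begin{align*}
    \e^{\varphi(t)} x^a(t) - x^a(t_0) \leq C^a \, \bigl[ \e^{\varphi(t)} - 1 \bigr],
\end{align*}
and hence $\smash{x^a(t) \leq C^a + \e^{-\varphi(t)} \bigl[ x^a(t_0) - C^a \bigr]}$. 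The hypothesis $\smash{\int_{t_0}^\infty \de s \; f(s) \bigl[ y(s) \bigr]^2 = \infty}$ is precisely the statement that $\smash{\lim_{t \to \infty} \varphi(t) = \infty}$, so the correction term vanishes in the limit and one reads off eq. (\ref{eq.: x - upper bound}).

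The role of the assumption $\smash{C^a \geq -1}$ is the mirror of the assumption $\smash{c^a \leq 1}$ in Lemma \ref{lemma: x - lower bound}: the sphere condition eq. (\ref{sphere-condition}) already forces $\smash{x^a \geq -1}$, so the conclusion can be simultaneously consistent and informative only when $\smash{C^a \geq -1}$. I do not foresee a genuine obstacle here: the step that really did work in the preceding lemma, namely the integrating-factor manipulation, carries over unchanged because $\smash{\e^{\varphi(t)} > 0}$ preserves the direction of the inequality. The only care needed is keeping straight the orientation of the $\smash{\min}$/$\smash{\max}$ bounds on $\smash{\sum_{i} c_{i}{}^a (y^i)^2}$ and the resulting replacement of $\smash{\liminf}$ with $\smash{\limsup}$.
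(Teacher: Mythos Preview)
Your proposal is correct and matches the paper's own proof essentially line for line: the same upper bound on $\dot{x}^a$, the same integrating factor $\e^{\varphi(t)}$, and the same integrated inequality $\e^{\varphi(t)} x^a(t) - x^a(t_0) \leq C^a \bigl[\e^{\varphi(t)} - 1\bigr]$. The only cosmetic difference is that you spell out the rearranged form $x^a(t) \leq C^a + \e^{-\varphi(t)}\bigl[x^a(t_0) - C^a\bigr]$ and the consistency role of $C^a \geq -1$ a bit more explicitly than the paper does.
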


\begin{proof}
    In view of eq. (\ref{x-equation}), one can write the inequality
    \begin{align*}
        \dot{x}^a = \biggl[ -x^a (y)^2 + \sum_{i} {c_{i}}^{a} (y^i)^2 \biggr] \, f \leq \bigl[ -x^a + C^a \bigr] \, f (y)^2.
    \end{align*}
    Then, defining the functions $\smash{\xi(t) = f(t) \bigl[ y(t) \bigr]^2}$ and $\smash{\varphi(t) = \int_{t_0}^t \de s \; \xi(s)}$, one can write
    \begin{align*}
        \dfrac{\de}{\de t} \, \bigl[ \e^{\varphi(t)} x^a(t) \bigr] = \e^{\varphi(t)} \bigl[ \dot{x}^a(t) + x^a(t) \, \xi(t) \bigr] \leq C^a \e^{\varphi(t)} \xi(t).
    \end{align*}
    By integrating this inequality, one then finds the further inequality
    \begin{align*}
        \e^{\varphi(t)} x^a(t) - x^a(t_0) \leq C^a \int_{t_0}^t \de s \; \e^{\varphi(s)} \xi(s) = C^a \, \bigl[ \e^{\varphi(t)} - 1 \bigr].
    \end{align*}
    If $C^a \geq - 1$ and if $\smash{\lim_{t \to \infty} \varphi(t) = \infty}$, then the conclusion in eq. (\ref{eq.: x - upper bound}) holds.
\end{proof}

\vspace{8pt}
\begin{corollary}
    If $C^a < -1$ for at least one $a$-index, then $\smash{\int_{t_0}^\infty \de s \; f(s) \bigl[ y(s) \bigr]^2 < \infty}$.
\end{corollary}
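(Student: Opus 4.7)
The plan is to mirror exactly the structure of the earlier companion corollary that handled the symmetric case $c^a > 1$, with Lemma \ref{lemma: x - upper bound} playing the role of Lemma \ref{lemma: x - lower bound}. I would proceed by contradiction: suppose that $\smash{\int_{t_0}^\infty \de s \; f(s) [y(s)]^2 = \infty}$, i.e.\ the function $\smash{\varphi(t) = \int_{t_0}^t \de s \; f(s)[y(s)]^2}$ satisfies $\smash{\lim_{t \to \infty} \varphi(t) = \infty}$.

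Under this hypothesis, the key inequality that drives Lemma \ref{lemma: x - upper bound}, namely
\begin{equation*}
    \e^{\varphi(t)} x^a(t) - x^a(t_0) \leq C^a \bigl[ \e^{\varphi(t)} - 1 \bigr],
\end{equation*}
is obtained purely from eq.\ (\ref{x-equation}) together with the definition of $C^a$, and its derivation does \emph{not} use the sign restriction $C^a \geq -1$. Dividing by $\smash{\e^{\varphi(t)}}$ and letting $t \to \infty$ therefore gives the unconditional conclusion
\begin{equation*}
    \limsup_{t \to \infty} x^a(t) \leq C^a,
\end{equation*}
for the distinguished $a$-index for which $C^a < -1$ is assumed.

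Next I would invoke the sphere condition (\ref{sphere-condition}). Since $\smash{(x^a)^2 \leq (x)^2 = 1 - (y)^2 \leq 1}$ pointwise in $t$, each component satisfies $\smash{x^a(t) \geq -1}$ at every time, and hence
\begin{equation*}
    \limsup_{t \to \infty} x^a(t) \geq -1 > C^a,
\end{equation*}
which directly contradicts the bound derived above. Therefore the assumption $\smash{\int_{t_0}^\infty \de s \; f(s) [y(s)]^2 = \infty}$ cannot hold, and the integral must be finite.

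There is no real obstacle; the only mildly delicate point is that Lemma \ref{lemma: x - upper bound} was stated with the hypothesis $C^a \geq -1$, so one should either reprove the one-line estimate for $x^a(t)$ in the proof (which is immediate) or remark that that hypothesis is not used in the step producing $\smash{\limsup_{t \to \infty} x^a(t) \leq C^a}$ once $\smash{\varphi(t) \to \infty}$ is assumed. Either way, the contradiction is obtained in a single clean step from the sphere condition.
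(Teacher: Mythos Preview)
Your proof is correct and follows the same contrapositive route as the paper: assume the integral diverges, deduce $\limsup_{t\to\infty} x^a(t) \leq C^a$ from the integrated inequality in Lemma \ref{lemma: x - upper bound}, and contradict the constraint $(x^a)^2 \leq 1$ forced by eq.~(\ref{sphere-condition}). You are in fact slightly more careful than the paper in spelling out that the hypothesis $C^a \geq -1$ in Lemma \ref{lemma: x - upper bound} plays no role in obtaining the upper bound on $\limsup x^a$ once $\varphi(t)\to\infty$; the paper's one-line proof tacitly relies on this same observation.
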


\begin{proof}
    If $C^a < - 1$, then eq. (\ref{eq.: x - upper bound}) cannot hold by eq. (\ref{sphere-condition}), which forces the inequality $(x)^2 \leq 1$ and therefore the inequality $(x^a)^2 \leq 1$. Therefore, one cannot have $\smash{\lim_{t \to \infty} \varphi(t) = \infty}$.
\end{proof}

\vspace{8pt}
\begin{remark}
    In view of eqs. (\ref{eq.: x - lower bound}, \ref{eq.: x - upper bound}), one can improve the constraint in eq. (\ref{eq.: f - upper bound}). Let $\smash{c^a \leq 1}$ and $\smash{C^a \geq -1}$. For each $a$-index, if $\smash{c^a > 0}$, then $\smash{\liminf_{t \to \infty} (x^a)^2 \geq (c^a)^2}$, while if $\smash{C^a < 0}$, then $\smash{\liminf_{t \to \infty} (x^a)^2 \geq (C^a)^2}$. This can be used to reformulate the definition of $\smash{c_\infty^a}$ as
    \begin{align*}
        c_\infty^a = \left\lbrace\!
        \begin{array}{lcl}
            c^a, & & c^a > 0, \; C^a > 0, \\
            0, & & c^a < 0, \; C^a > 0, \\
            \ab C^a \ab, & & c^a < 0, \; C^a < 0.
        \end{array}
        \right.
    \end{align*}
    In view of constraining $f$, it is interesting to find the bounds that keep $\smash{(x)^2}$ as far from the origin as possible.
\end{remark}

\vspace{8pt}
\begin{remark}
    In view of eqs. (\ref{x-equation}, \ref{y-equation}), by defining $\smash{\hat{x}^a = - x^a}$, one can reformulate the problem in terms of the set of constants $\smash{{\hat{c}_i}\vphantom{c_i}^a = - {c_i}^a}$. Because $\smash{\sup x^a = - \inf \hat{x}^a}$ and $\smash{\hat{c}^a = \min_i {\hat{c}_i}\vphantom{c_i}^a = - \max_i {c_i}^a = - C^a}$, the constraint $\smash{\limsup_{t \to \infty} x^a(t) \leq C^a}$ in eq. (\ref{eq.: x - upper bound}) is equivalent to the constraint $\smash{\liminf_{t \to \infty} \hat{x}^a(t) \geq \hat{c}^a}$ in eq. (\ref{eq.: x - lower bound}). Therefore, one can always redefine the variables in such a way that the constraint on the function $f$ in eq. (\ref{eq.: f - upper bound}) holds in full generality.
\end{remark}

\vspace{12pt}

Part of the above results can be generalized to the case in which some of the $y^i$-functions are imaginary. Let $\smash{(y^{i_+})^2 = (y_+^{i_+})^2 > 0}$ and $\smash{(y^{i_-})^2 = - (y_-^{i_-})^2 > 0}$, distinguishing over all possible indices $\smash{i = i_+, i_-}$. For each $a$-index, let $\smash{c_+^a = \min_{i_+} {c_{i_+}}^a}$ and $\smash{C_-^a = \max_{i_-} {c_{i_-}}^a}$, and let $\smash{C_+^a = \max_{i_+} {c_{i_+}}^a}$ and $\smash{c_-^a = \min_{i_-} {c_{i_-}}^a}$. As a further assumption, let $\smash{(y_+)^2 > (y_-)^2}$ at all times.

\vspace{8pt}
\begin{lemma} \label{lemma: x - lower bound 2}
    If $\smash{c_+^a \leq 1}$, with $\smash{c_+^a \geq C_-^a}$, and if $\smash{\int_{t_0}^\infty \de s \; f(s) \, \bigl\lbrace \bigl[ y_+(s) \bigr]^2 - \bigl[ y_-(s) \bigr]^2 \bigr\rbrace = \infty}$, then one has
    \begin{equation} \label{eq.: x - lower bound 2}
        \liminf_{t \to \infty} x^a(t) \geq c_+^a.
    \end{equation}
\end{lemma}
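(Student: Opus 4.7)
The plan is to imitate Lemma~\ref{lemma: x - lower bound} line by line, with the only new ingredient being a way to combine the positive- and negative-definite pieces of $\smash{\sum_i c_i^a (y^i)^2}$ into a single lower bound that is proportional to $\smash{(y_+)^2 - (y_-)^2}$. This is exactly the quantity that the integrability hypothesis controls, and the standing assumption $\smash{(y_+)^2 > (y_-)^2}$ keeps it non-negative.

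First I would rewrite the $\dot{x}^a$-equation~(\ref{x-equation}) using the splits $\smash{(y)^2 = (y_+)^2 - (y_-)^2}$ and $\smash{\sum_i c_i^a (y^i)^2 = \sum_{i_+} c_{i_+}^a (y_+^{i_+})^2 - \sum_{i_-} c_{i_-}^a (y_-^{i_-})^2}$. On the positive-definite piece the inequality $\smash{c_+^a \leq c_{i_+}^a}$ gives $\smash{\sum_{i_+} c_{i_+}^a (y_+^{i_+})^2 \geq c_+^a (y_+)^2}$. On the negative-definite piece the overall minus sign flips the direction of the bound, so the estimate reads $\smash{-\sum_{i_-} c_{i_-}^a (y_-^{i_-})^2 \geq -C_-^a (y_-)^2}$. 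The hypothesis $\smash{c_+^a \geq C_-^a}$, together with $\smash{(y_-)^2 \geq 0}$, then allows me to replace $\smash{-C_-^a}$ by $\smash{-c_+^a}$, which causes both contributions to collapse to a common coefficient $\smash{c_+^a}$ multiplying the combination $\smash{(y_+)^2 - (y_-)^2}$. The net outcome is the clean estimate
\begin{equation*}
    \dot{x}^a \geq \bigl[ - x^a + c_+^a \bigr] \, f \, \bigl\lbrace (y_+)^2 - (y_-)^2 \bigr\rbrace.
\end{equation*}

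From this point the argument is a direct transcription of Lemma~\ref{lemma: x - lower bound}, now with $\smash{\xi(t) = f(t) \bigl\lbrace [y_+(t)]^2 - [y_-(t)]^2 \bigr\rbrace \geq 0}$ in place of the old $\smash{f (y)^2}$. Setting $\smash{\varphi(t) = \int_{t_0}^t \de s \; \xi(s)}$ and applying the integrating factor $\smash{\e^{\varphi}}$ gives $\smash{(\de / \de t) \bigl[ \e^{\varphi(t)} x^a(t) \bigr] \geq c_+^a \, \e^{\varphi(t)} \xi(t)}$; integration from $t_0$ to $t$ yields $\smash{x^a(t) \geq c_+^a + \e^{-\varphi(t)} \bigl[ x^a(t_0) - c_+^a \bigr]}$. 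The hypothesis $\smash{\varphi(t) \to \infty}$ then kills the exponential correction, delivering $\smash{\liminf_{t \to \infty} x^a(t) \geq c_+^a}$, with the standing condition $\smash{c_+^a \leq 1}$ serving only to keep the bound consistent with the expected range $\smash{(x^a)^2 \leq 1}$.

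The main obstacle is the sign bookkeeping in the first step: the imaginary $\smash{y^{i_-}}$-variables enter with opposite signs in both the kinetic piece $\smash{-x^a (y)^2}$ and the source piece $\smash{\sum_i c_i^a (y^i)^2}$, and one must orchestrate the two bounds so that the single ordering hypothesis $\smash{c_+^a \geq C_-^a}$ -- relating a minimum over positive-definite indices to a maximum over negative-definite ones -- is precisely what is needed to match coefficients in both pieces. Once that algebraic step is executed, the remainder is analytically identical to Lemma~\ref{lemma: x - lower bound} and I do not anticipate further difficulty.
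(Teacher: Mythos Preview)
Your proposal is correct and follows essentially the same approach as the paper: split $(y)^2$ and $\sum_i {c_i}^a (y^i)^2$ into their $i_+$- and $i_-$-parts, bound each piece by $c_+^a$ and $C_-^a$ respectively, use $c_+^a \geq C_-^a$ to collapse both into the single coefficient $c_+^a$ multiplying $(y_+)^2-(y_-)^2$, and then rerun the integrating-factor argument of Lemma~\ref{lemma: x - lower bound} verbatim with $\xi = f\bigl[(y_+)^2-(y_-)^2\bigr]$. Your remark that the hypothesis $c_+^a \leq 1$ is only a consistency condition with $(x^a)^2 \leq 1$, rather than an ingredient of the estimate itself, is also in line with how the paper treats it.
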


\begin{proof}
    In view of eq. (\ref{x-equation}), one can write the inequality
    \begin{align*}
        \dot{x}^a & = \biggl[ -x^a \bigl[ (y_+)^2 - (y_-)^2 \bigr] + \sum_{i_+} {c_{i_+}}^{a} (y^{i_+})^2 - \sum_{i_-} {c_{i_-}}^{a} (y^{i_-})^2 \biggr] \, f \\
        & \geq \biggl[ -x^a \bigl[ (y_+)^2 - (y_-)^2 \bigr] + c_+^a (y_+)^2 - C_-^a (y_-)^2 \biggr] \, f.
    \end{align*}
    Because $\smash{c_+^a \geq C_-^a}$, one can further write
    \begin{align*}
        \dot{x}^a \geq \bigl[ -x^a + c_+^a \bigr] \, \bigl[ (y_+)^2 - (y_-)^2 \bigr] \, f
    \end{align*}
    and then replicate exactly the procedure that leads to eq. (\ref{eq.: x - lower bound}). If it were $\smash{c_+^a < C_-^a}$, one would not be able to write useful inequalities.
\end{proof}

\vspace{8pt}
\begin{corollary}
    If $\smash{c_+^a > 1}$ for at least one $a$-index, then $\smash{\int_{t_0}^\infty \de s \; f(s) \, \bigl\lbrace \bigl[ y_+(s) \bigr]^2 - \bigl[ y_-(s) \bigr]^2 \bigr\rbrace < \infty}$.
\end{corollary}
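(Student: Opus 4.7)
This corollary is the direct analogue, in the generalized (indefinite $y$) setting, of the earlier corollary that follows Lemma~\ref{lemma: x - lower bound}. The plan is to argue by contradiction: suppose on the contrary that
\begin{align*}
    \int_{t_0}^\infty \de s \, f(s) \, \bigl\lbrace \bigl[ y_+(s) \bigr]^2 - \bigl[ y_-(s) \bigr]^2 \bigr\rbrace = \infty,
\end{align*}
and show that this forces $\liminf_{t \to \infty} x^a(t) \geq c_+^a$, which is incompatible with the sphere condition when $c_+^a > 1$.

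First, I would verify that the differential inequality at the heart of Lemma~\ref{lemma: x - lower bound 2},
\begin{align*}
    \dot{x}^a \geq \bigl[ -x^a + c_+^a \bigr] \, \bigl[ (y_+)^2 - (y_-)^2 \bigr] \, f,
\end{align*}
is still available for the distinguished $a$-index. This step uses the standing hypotheses inherited from the lemma (namely $c_+^a \geq C_-^a$ and the global assumption $(y_+)^2 > (y_-)^2$) but, crucially, does \emph{not} invoke $c_+^a \leq 1$: the latter bound was used in the lemma only to make the conclusion consistent with the sphere condition, not in deriving the inequality itself. Next, exactly as in the proof of Lemma~\ref{lemma: x - lower bound 2}, I would introduce the positive weight $\xi(t) = f(t)\{[y_+(t)]^2 - [y_-(t)]^2\}$ and $\varphi(t) = \int_{t_0}^t \de s \, \xi(s)$, multiply by the integrating factor $\e^{\varphi(t)}$, and integrate to obtain
\begin{align*}
    x^a(t) \geq c_+^a + \e^{-\varphi(t)} \bigl[ x^a(t_0) - c_+^a \bigr].
\end{align*}
Under the contradiction hypothesis, $\varphi(t) \to \infty$, so $\liminf_{t \to \infty} x^a(t) \geq c_+^a$.

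Finally, I would close the argument via the sphere condition (\ref{sphere-condition}), which in the present generalized setting reads $(x)^2 = 1 - (y_+)^2 + (y_-)^2$. Combined with the global assumption $(y_+)^2 > (y_-)^2$, this forces $(x)^2 < 1$, hence $\ab x^a \ab < 1$ for every $a$. But $c_+^a > 1$ would imply $\liminf_{t\to\infty} x^a(t) \geq c_+^a > 1$, a contradiction. Therefore $\varphi(t) \not\to \infty$, i.e.\ the integral is finite.

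There is essentially no serious obstacle here: the work is all done by Lemma~\ref{lemma: x - lower bound 2}. The only point that warrants a sentence of care in writing up the proof is the observation that the differential inequality and the integration step are valid regardless of whether $c_+^a \leq 1$, so the machinery of the lemma truly applies to derive a lower bound on $\liminf x^a$ that then conflicts with the sphere constraint.
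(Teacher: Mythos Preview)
Your proposal is correct and follows essentially the same route as the paper: argue by contradiction that if the integral diverges, the integrating-factor argument from Lemma~\ref{lemma: x - lower bound 2} yields $\liminf_{t\to\infty} x^a(t) \geq c_+^a > 1$, which violates the sphere condition~(\ref{sphere-condition}). The paper's proof is a one-line invocation of this logic; your write-up is more explicit, and in particular your observation that the hypothesis $c_+^a \leq 1$ in Lemma~\ref{lemma: x - lower bound 2} is used only to reconcile the conclusion with the sphere constraint (not to derive the differential inequality or integrate it) is exactly the point the paper leaves implicit.
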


\begin{proof}
    If $\smash{c_+^a > 1}$, then eq. (\ref{eq.: x - lower bound 2}) cannot hold by eq. (\ref{sphere-condition}).
\end{proof}

\vspace{8pt}
\begin{lemma} \label{lemma: x - upper bound 2}
    If $\smash{c_-^a \geq -1}$, with $\smash{c_-^a \geq C_+^a}$, and if $\smash{\int_{t_0}^\infty \de s \; f(s) \, \bigl\lbrace \bigl[ y_+(s) \bigr]^2 - \bigl[ y_-(s) \bigr]^2 \bigr\rbrace = \infty}$, then one has
    \begin{equation} \label{eq.: x - upper bound 2}
        \limsup_{t \to \infty} x^a(t) \leq c_-^a.
    \end{equation}
\end{lemma}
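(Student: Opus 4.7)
The plan is to mirror the structure of Lemma \ref{lemma: x - lower bound 2}, except that we now seek an upper bound on $\dot{x}^a$. Splitting the scalar-potential sum in eq. (\ref{x-equation}) over the $i_+$- and $i_-$-indices and applying $\smash{\sum_{i_+} {c_{i_+}}^{a} (y^{i_+})^2 \leq C_+^a (y_+)^2}$ together with $\smash{-\sum_{i_-} {c_{i_-}}^{a} (y^{i_-})^2 \leq -c_-^a (y_-)^2}$ yields
\begin{align*}
    \dot{x}^a \leq \bigl\{ -x^a \bigl[ (y_+)^2 - (y_-)^2 \bigr] + C_+^a (y_+)^2 - c_-^a (y_-)^2 \bigr\} \, f.
\end{align*}
Note that when upper-bounding one has to pick $\max_{i_+}$ and $\min_{i_-}$, which is the opposite of what was done in Lemma \ref{lemma: x - lower bound 2}.

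The crux is then to repackage the right-hand side in the clean form $\smash{[-x^a + c_-^a] \bigl[ (y_+)^2 - (y_-)^2 \bigr] \, f}$. A short algebraic check shows that the difference between the two expressions equals $\smash{(y_+)^2 \bigl[ C_+^a - c_-^a \bigr] \, f}$, which is manifestly nonpositive exactly because of the ordering hypothesis $c_-^a \geq C_+^a$; hence
\begin{align*}
    \dot{x}^a \leq \bigl[ -x^a + c_-^a \bigr] \, \bigl[ (y_+)^2 - (y_-)^2 \bigr] \, f.
\end{align*}
This is the direct analog of the inequality used in Lemma \ref{lemma: x - lower bound 2}. The ordering hypothesis is precisely what makes the bound close: if it failed, the residual term would have indefinite sign and the subsequent integrating-factor step could not produce a monotone one-sided estimate. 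Sign-tracking in this repackaging is the main technical obstacle of the argument.

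From this point the proof replays the integrating-factor trick of the companion lemma. Defining $\smash{\xi(t) = f(t) \bigl\lbrace \bigl[ y_+(t) \bigr]^2 - \bigl[ y_-(t) \bigr]^2 \bigr\rbrace}$ and $\smash{\varphi(t) = \int_{t_0}^t \de s \; \xi(s)}$, one has $\smash{\dfrac{\de}{\de t} \bigl[ \e^{\varphi(t)} x^a(t) \bigr] \leq c_-^a \, \e^{\varphi(t)} \xi(t)}$, and integration from $t_0$ to $t$ gives $\smash{\e^{\varphi(t)} x^a(t) - x^a(t_0) \leq c_-^a \bigl[ \e^{\varphi(t)} - 1 \bigr]}$, i.e.\ $\smash{x^a(t) \leq c_-^a + \e^{-\varphi(t)} \bigl[ x^a(t_0) - c_-^a \bigr]}$. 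The hypothesis $\smash{\int_{t_0}^\infty \de s \; \xi(s) = \infty}$ forces $\varphi(t) \to \infty$, whence $\smash{\limsup_{t \to \infty} x^a(t) \leq c_-^a}$. The role of $c_-^a \geq -1$ is to ensure that this target estimate is compatible with the sphere constraint $(x)^2 \leq 1$ from eq. (\ref{sphere-condition}); were it violated, eq. (\ref{sphere-condition}) would prevent $\varphi(t) \to \infty$ and one would instead recover the natural contrapositive corollary rather than the lemma itself.
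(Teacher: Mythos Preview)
Your proof is correct and follows essentially the same approach as the paper: bound the $i_+$- and $i_-$-sums by $C_+^a$ and $c_-^a$ respectively, use the ordering hypothesis $c_-^a \geq C_+^a$ to repackage into $\smash{[-x^a + c_-^a]\,[(y_+)^2 - (y_-)^2]\,f}$, and then run the integrating-factor argument of the earlier lemma. The only difference is that the paper is terser, merely saying to replicate the procedure of eq.~(\ref{eq.: x - upper bound}), whereas you write out the integrating-factor step and the algebraic check $(y_+)^2[C_+^a - c_-^a]\,f \leq 0$ explicitly.
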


\begin{proof}
    In view of eq. (\ref{x-equation}), one can write the inequality
    \begin{align*}
        \dot{x}^a & = \biggl[ -x^a \bigl[ (y_+)^2 - (y_-)^2 \bigr] + \sum_{i_+} {c_{i_+}}^{a} (y^{i_+})^2 - \sum_{i_-} {c_{i_-}}^{a} (y^{i_-})^2 \biggr] \, f \\
        & \leq \biggl[ -x^a \bigl[ (y_+)^2 - (y_-)^2 \bigr] + C_+^a (y_+)^2 - c_-^a (y_-)^2 \biggr] \, f.
    \end{align*}
    Because $\smash{c_-^a \geq C_+^a}$, one can further write
    \begin{align*}
        \dot{x}^a \leq \bigl[ -x^a + c_-^a \bigr] \, \bigl[ (y_+)^2 - (y_-)^2 \bigr] \, f
    \end{align*}
    and then replicate exactly the procedure that leads to eq. (\ref{eq.: x - upper bound}). Therefore, If $\smash{c_-^a \geq - 1}$ and if $\smash{\lim_{t \to \infty} \varphi(t) = \infty}$, then the conclusion in eq. (\ref{eq.: x - upper bound 2}) holds. If it were $\smash{c_-^a < C_+^a}$, one would not be able to write useful inequalities.
\end{proof}

\vspace{8pt}
\begin{corollary}
    If $\smash{c_-^a < -1}$ for at least one $a$-index, then $\smash{\int_{t_0}^\infty \de s \; f(s) \, \bigl\lbrace \bigl[ y_+(s) \bigr]^2 - \bigl[ y_-(s) \bigr]^2 \bigr\rbrace < \infty}$.
\end{corollary}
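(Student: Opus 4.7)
The plan is to mirror the pattern used in the corollary following Lemma \ref{lemma: x - upper bound}: proceed by contradiction, invoke the inequality chain of Lemma \ref{lemma: x - upper bound 2}, and close with the sphere condition together with the standing assumption $\smash{(y_+)^2 > (y_-)^2}$.

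First I would fix an $a$-index for which $\smash{c_-^a < -1}$ and assume, towards a contradiction, that $\smash{\int_{t_0}^\infty \de s \, f(s)\bigl\lbrace[y_+(s)]^2 - [y_-(s)]^2\bigr\rbrace = \infty}$. With the ordering $\smash{c_-^a \geq C_+^a}$ inherited from the setup of Lemma \ref{lemma: x - upper bound 2}, the proof of that lemma produces the bound $\smash{\dot x^a \leq [-x^a + c_-^a]\bigl[(y_+)^2 - (y_-)^2\bigr] f}$ directly from eq. (\ref{x-equation}). Setting $\smash{\xi(t) = f(t)\bigl\lbrace[y_+(t)]^2 - [y_-(t)]^2\bigr\rbrace}$ and $\smash{\varphi(t) = \int_{t_0}^t \de s\,\xi(s)}$, the same integrating-factor manipulation as in the lemma yields
\begin{equation*}
    x^a(t) \leq c_-^a + \e^{-\varphi(t)}\,\bigl[x^a(t_0) - c_-^a\bigr].
\end{equation*}
I would stress at this point that the derivation itself never uses $\smash{c_-^a \geq -1}$: that hypothesis is only inserted in the statement of the lemma to make the resulting estimate $\smash{\limsup x^a \leq c_-^a}$ compatible with the sphere condition; the algebra leading to the displayed inequality goes through verbatim whenever $\smash{c_-^a \geq C_+^a}$.

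To finish, I would combine the contradiction hypothesis $\smash{\varphi(t)\to\infty}$ with the above display to conclude $\smash{\limsup_{t\to\infty} x^a(t) \leq c_-^a < -1}$. However, the hypothesis $\smash{(y_+)^2 > (y_-)^2}$ together with eq. (\ref{sphere-condition}) gives $\smash{(x)^2 = 1 - \bigl[(y_+)^2 - (y_-)^2\bigr] < 1}$, so in particular $\smash{x^a(t) > -1}$ for every $t$. This is incompatible with $\smash{\limsup x^a(t) \leq c_-^a < -1}$, yielding the desired contradiction and forcing the integral to be finite.

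The only delicate point, and what I would flag as the main obstacle, is justifying the application of the lemma's inequality chain in a regime where its stated hypothesis $\smash{c_-^a \geq -1}$ fails. The key observation is that the two genuinely needed ingredients, namely $\smash{c_-^a \geq C_+^a}$ (to collapse $\smash{-x^a(y_+)^2 + x^a(y_-)^2 + C_+^a(y_+)^2 - c_-^a(y_-)^2}$ into $\smash{[-x^a+c_-^a]\bigl[(y_+)^2-(y_-)^2\bigr]}$) and $\smash{(y_+)^2 > (y_-)^2}$ (to preserve inequality directions when multiplying through), are both still in force; once this is made explicit, the rest is routine.
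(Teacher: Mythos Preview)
Your argument is correct and is essentially the same as the paper's: both proceed by contradiction, observe that the integrating-factor estimate from Lemma \ref{lemma: x - upper bound 2} would force $\smash{\limsup_{t\to\infty} x^a(t)\leq c_-^a<-1}$, and then note this clashes with the constraint $\smash{(x^a)^2\leq 1}$ coming from eq.~(\ref{sphere-condition}). The paper compresses this into a single line, while you spell out explicitly (and rightly) that the hypothesis $\smash{c_-^a\geq -1}$ plays no role in the derivation of the inequality itself; your additional appeal to the strict inequality $\smash{(x)^2<1}$ is harmless but not actually needed, since $\smash{x^a\in[-1,1]}$ already suffices for the contradiction.
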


\begin{proof}
    If $\smash{c_-^a < 1}$, then eq. (\ref{eq.: x - upper bound 2}) cannot hold by eq. (\ref{sphere-condition}).
\end{proof}

\vspace{8pt}
\begin{remark}
    For an arbitrary $\smash{{c_i}^a}$-matrix, with arbitrary signs of the $\smash{(y^i)^2}$-variables and the condition that $\smash{(y)^2 > 0}$ at all times $t > t_0$, one can find valuable constraints for $\smash{x^a}$ if at least one of the inequalities $\smash{c_+^a \geq C_-^a}$ and $\smash{c_-^a \geq C_+^a}$ is satisfied. One can have either one or none of them to hold: in the latter case no useful bound can be written.
\end{remark}

\vspace{8pt}
\begin{remark} \label{remark: bound hierarchy}
    In view of eqs. (\ref{x-equation}, \ref{y-equation}), if $\smash{c_-^a \geq C_+^a}$, one can define the variables $\smash{\hat{x}^a = - x^a}$, so as to reformulate the problem in terms of the coefficients $\smash{{\hat{c}_i}\vphantom{c_i}^a = - {c_i}^a}$. Because $\smash{\sup x^a = - \inf \hat{x}^a}$, and because $\smash{\hat{c}_+^a = \min_{i_+} {\hat{c}_{i_+}}\vphantom{c_i}^a = - \max_{i_+} {c_{i_+}}^a = - C_+^a}$ and $\smash{\hat{C}_-^a = \max_{i_-} {\hat{c}_{i_-}}\vphantom{c_i}^a = - \min_{i_-} {c_{i_-}}^a = - c_-^a}$, this means that the constraint $\smash{\sup x^a(t) \leq c_-^a}$ in eq. (\ref{eq.: x - upper bound 2}) can be expressed as the constraint $\smash{\inf \hat{x}^a(t) \geq \hat{C}_-^a}$: as $\smash{\hat{c}_+^a \geq \hat{C}_-^a}$, this constraint is weaker than the one in eq. (\ref{eq.: x - lower bound 2}). One can reach the same conclusion by considering the case in which $\smash{c_+^a \geq C_-^a}$. Therefore, if either $\smash{c_-^a \geq C_+^a}$ or $\smash{c_+^a \geq C_-^a}$, one can always formulate the problem in terms of the constraint in eq. (\ref{eq.: x - lower bound 2}).
\end{remark}

\vspace{8pt}
\begin{remark} \label{remark: general f - upper bound}
    For each $a$-index, if $\smash{c_+^a > 0}$, let $\smash{c_\infty^a = c_+^a}$; if $\smash{c_+^a \leq 0}$, then let $\smash{c_\infty^a = 0}$. Let $\smash{c_+^a \geq C_-^a}$, with $\smash{(c_\infty)^2 \leq 1}$: one then has the constraint in eq. (\ref{eq.: f - upper bound}).
\end{remark}

\vspace{8pt}
\begin{remark} \label{remark: general non-proper attractors}
    It is apparent that, if $\smash{(y)^2 = (y_+)^2 - (y_-)^2 > 0}$, then eqs. (\ref{eq.: unique non-proper x-attractor}, \ref{eq.: unique non-proper y-attractor}) also hold: in particular, they hold if $\smash{(c_\infty)^2 > 1}$.
\end{remark}

\newpage
\begin{acknowledgments}
\subsection*{Acknowledgments}
GS and FT are supported in part by the DOE grant DE-SC0017647. HT is supported in part by the NSF CAREER grant DMS-1843320 and a Vilas Faculty Early-Career Investigator Award.
\end{acknowledgments}

\bibliographystyle{apsrev4-1}
\bibliography{report.bib}

\end{document}